\newcommand{\RR}{\mathbb R}
\newcommand{\CC}{\mathbb C}
\newcommand{\ZZ}{{\mathbb Z}}
\renewcommand{\Re}{\mathop{\mathrm{Re}}}
\renewcommand{\Im}{\mathop{\mathrm{Im}}}
\newtheorem{theorem}{Theorem}
\newtheorem{lemma}{Lemma}
\newtheorem{remark}{Remark}
\newtheorem{example}{Example}
\newcommand{\beq}{\begin{equation}}
\newcommand{\eeq}{\end{equation}}
\newcommand{\ba}{\begin{array}}
\newcommand{\ea}{\end{array}}
\newcommand{\bea}{\begin{eqnarray}}
\newcommand{\eea}{\end{eqnarray}}
\DeclareMathAlphabet{\mathpzc}{OT1}{pzc}{m}{it}
\begin{document}

\begin{center}
  The finite gap method and the periodic Cauchy problem\\ of $2+1$ dimensional anomalous waves\\ for the focusing Davey-Stewartson 2 equation. 1
\vskip 10pt
{\it P. G. Grinevich $^{1,3}$ and P. M. Santini $^{2,4}$}

\vskip 10pt

\vskip 10pt

{\it 
$^1$ Steklov Mathematical Institute of Russian Academy of Sciences, 8 Gubkina St., Moscow, 199911, Russia.

\smallskip

$^2$ Dipartimento di Fisica, Universit\`a di Roma "La Sapienza", and \\
Istituto Nazionale di Fisica Nucleare (INFN), Sezione di Roma, \\ 
Piazz.le Aldo Moro 2, I-00185 Roma, Italy}

\vskip 10pt

$^{3}$e-mail:  {\tt pgg@landau.ac.ru}\\
$^{4}$e-mail:  {\tt paolomaria.santini@uniroma1.it \\ paolo.santini@roma1.infn.it}
\vskip 10pt

{\today}

\end{center}

\begin{abstract}
The focusing Nonlinear Schr\"odinger (NLS) equation is the simplest universal model describing the modulation instability (MI) of $1+1$ dimensional quasi monochromatic waves in weakly nonlinear media, and MI is considered the main physical mechanism for the appearence of  anomalous (rogue) waves (AWs) in nature. In analogy with the recently developed analytic theory of periodic AWs of the focusing NLS equation, in this paper we extend these results to a $2+1$ dimensional context, concentrating on the focusing Davey-Stewartson 2 (DS2) equation, an integrable $2+1$ dimensional generalization of the focusing NLS equation. More precisely, we use the finite gap theory to solve, to leading order, the doubly periodic Cauchy problem of the focusing DS2 equation for small initial perturbations of the unstable background solution, what we call the periodic Cauchy problem of the AWs. As in the NLS case, we show that, to leading order, the solution of this Cauchy problem is expressed in terms of elementary functions of the initial data.
\end{abstract}

\section{Introduction}

Anomalous waves (AWs), also called rogue, freak waves, are extreme waves of anomalously large amplitude with respect to the surrounding waves, arising apparently from nowhere and disappearing without leaving any trace. Deep sea water is the first environment where AWs have been studied, and the term rogue wave (RW) was first coined by oceanographers (a RW can exceed the height of 20 meters and can be very dangerous). Although the existence of AWs was notorious even to the ancients, and it is a recurring theme in the history of literature, the first scientific observation and measurement of an AW has been made only in 1995 at the Draupner oil platform in the North Sea \cite{Haver}. Starting from the pioneering work on optical fibers \cite{Solli}, it was understood that AWs are not confined to oceanography, and their presence is ubiquitous in nature: they have been observed or predicted in nonlinear optics \cite{Solli,Kibler,Kibler2,Akhmed2013,Liu}, in Bose-Einstein condensates \cite{Wen,Bludov}, in plasma physics \cite{Bailung,Moslem}, and in other physical contexts. In the presence of nonlinearity, the accepted explanation for AW formation is the modulational instability (MI) of some basic solutions, first discovered in nonlinear optics \cite{Bespalov} and ocean waves \cite{BF,Zakharov}, but associated with the appearance of AWs only in the last 20 years. In the understanding of the analytic properties of AWs, it is of great importance the fact that the nonlinear stages of MI in 1+1 dimensions are well described by exact solutions of the integrable \cite{ZSh1,ZSh2} focusing Nonlinear Schr\"odinger (NLS) equation in 1+1 dimensions
\begin{equation}\label{eq:DS_q1}
i u_t +u_{xx}+2 |u|^2 u=0, \ \ u=u(x,t)\in\CC, \ \ x,t\in\RR
\end{equation}
known as breathers \cite{Kuznetsov, Peregrine, Akhmed0}, which can be considered as prototypes of AWs \cite{Dysthe,Henderson,Osborne,ZakharovGelash2,Akhmed2013,Baronio}. Such solutions can be reproduced in wave tank laboratories, optical fibers and photorefractive crystals with a high degree of accuracy  \cite{Chabchoub,Kibler,PieranFZMAGSCDR}.

Using the finite-gap method, the NLS Cauchy problem for periodic initial perturbations of the unstable background, what we call the Cauchy problem of AWs, was recently solved to leading order \cite{GS1,GS5} in the case of a finite number of unstable modes, leading to a quantitative description of the recurrence properties of the multi-breather generalization \cite{Its} of the Akhmediev breather \cite{Akhmed0}. In the simplest case of one unstable mode only, this theory describes quantitatively a Fermi-Pasta-Ulam-Tsingou recurrence of AWs described by the Akhmediev breather (AB) \cite{GS1,GS2}. In addition, a finite-gap perturbation theory for 1+1 dimensional AWs has been also developed \cite{Coppini1,Coppini2} to describe analytically the order one effect of small physical perturbations of the NLS model on the AW dynamics. See also \cite{Coppini3}.

We remark that in this paper the term AW is used in an extended sense, and by it we just mean order one (or higher order) coherent structures over the unstable background, generated by MI, and the above results on NLS AWs deal with the analytic aspects of the deterministic theory of periodic AWs, for a finite number of unstable modes. A coherent structure of anomalously large amplitude with respect to the average one arises from the event in which the nonlinear interaction of many unstable modes is constructive. Therefore the formation of an AW is, strictly speaking, a statistical event. Statistical aspects of the theory of NLS AWs in 1+1 dimensions can be found in  \cite{Gelash-stat, Dematteis-stat, El1-stat,El2-stat}. 

While optical fibers offer a natural testbed for 1+1 dimensional AWs, ocean wind waves are intrinsically two dimensional and it is not yet clear to what extent the above analytic solutions of the 1+1 NLS equation can be observed in the ocean and are relevant in multidimensional nonlinear optics \cite{Fedele,Onorato1}. A good understanding of the deterministic and statistical properties of AWs in 2+1 dimensions is to date lacking; the main difficulty in generalizing the 1+1 theory to a multidimensional context is related to the fact that the large majority of physically relevant 2+1 generalizations of the NLS equation are not integrable, and exact solution techniques are not available. We have in mind, for instance, the elliptic NLS equation, relevant in nonlinear optics \cite{Newell,Kivshar}, the hyperbolic NLS equation, relevant in deep-water gravitational waves \cite{Zakharov},  and the majority of Davey-Stewartson type equations \cite{DS}, relevant in nonlinear optics, water waves, plasma physics and Bose condensates \cite{Benney,DS,ABB,Nishinari,Huang}.

The integrable Davey-Stewartson (DS) equations \cite{Anker,AH} can be written in the following form
\begin{equation}\label{eq:DS1}
  \begin{array}{l}
    i u_t +u_{xx}-\nu^2 u_{yy}+2\eta q u=0, \ \  \eta = \pm 1, \ \ \nu^2 = \pm 1, \\
    q_{xx}+\nu^2 q_{yy}= \left(|u|^2\right)_{xx} - \nu^2\left(|u|^2\right)_{yy},  \\
  u=u(x,y,t)\in\CC, \ \ q(x,y,t)\in\RR, \ \ x,y,t\in\RR,  
  \end{array}
\end{equation}
where $u$ is the complex amplitude of the monochromatic wave, and the real field $q(x,y,t)$ is related to the mean flow. If $\nu^2=-1$ we have the DS1 equation (surface tension prevails on gravity in the water wave derivation); in this case the sign of $\eta$ is irrelevant, since one can go from the equation with $\eta=-1$ to the equation with $\eta=1$ via the changes $q\to -q$ and $x\leftrightarrow y$; therefore there exists only one DS1 equation. If $\nu^2=1$, gravity prevails on surface tension and we have the DS2 equations; in this case the sign of $\eta$ cannot be rescaled away and we distinguish between focusing and defocusing DS2 equations for respectively $\eta=1$ and $\eta=-1$. It turns out that the shallow water limit of the Benney-Roskes equations \cite{Benney} leads to the DS1 and to the defocusing DS2 equations \cite{AS}. 

Although some examples of AW exact solutions of the DS equations \eqref{eq:DS1} are present in the literature (see for example \cite{Liu1,Liu2,Otha1, Otha2}), their importance in physically relevant Cauchy problems is still to be understood. 

As we shall see in the next section, i) the doubly periodic Cauchy problem is well posed for the focusing and defocusing DS2 equations; ii) a linear stability analysis implies that, as in the NLS case, the DS2 background solution is linearly stable in the defocusing case $\eta=-1$, and unstable for sufficiently small wave vectors in the focusing case $\eta=1$. It follows that, although its physical relevance is not clear at the moment, the integrable focusing DS2 equation (with $\nu^2=\eta=1$) is the best mathematical model on which to construct an analytic theory of $2+1$ dimensional AWs. This is the goal of this paper.

  The focusing DS2 equation has also important applications to the differential geometry of surfaces in $\RR^4$. The fact that the surfaces in $\RR^3$ can be locally immersed using squared eigenfunctions of the 2-d Dirac operator with real potential $u(x,y)$, and that the modified Novikov-Veselov hierarchy {\color{Violet} (which is a part of the focusing DS2 hieararchy with extra reality reduction)} acts on such immersions was pointed out in \cite{Konopel1}. Extension of this construction to the surfaces in $\RR^4$ was suggested in \cite{PedPink}, see also \cite{Konopel2}. For the surfaces in $\RR^4$ the symmetries of immersions are generated by the DS2 hierarchy.

  If the surface is compact and has the topology of a torus, the corresponding Dirac operators are doubly-periodic. The existence of a global representation for immersions of tori into $\RR^3$ was proved in \cite{Taim1}, see also \cite{Taim2,Taim3}. In this case, in particular, the famous Willmore functional coincides with the energy conservation law for the  Novikov-Veselov hierarchy \cite{Taim1}; therefore is is possible to apply the soliton theory to classical geometrical problems, see the review  \cite{Taim6} and references therein for more details. Let us point out that, in contrast with the 1-dimensional case, the analytical aspects of the spectral theory in 2-dimensions are deeply non-trivial; for example, the proof of the existence of the zero-energy spectral curve, obtained in \cite{Taim2}, see also \cite{Taim6}, is based on a deep result  of Keldysh. Let us remark that a different approach for constructing the spectral curve, providing more information about its asymptotic properties, was developed in \cite{Krichever2,Krichever3}.    

{\color{Violet}
Let us also remark that, in contrast with the immersions into $\RR^3$, the parameterization of the surfaces in $\RR^4$ in terms of the Dirac operator is not unique, and different representations may generate different dynamics of surfaces. In addition, a priory it is not clear if the DS dynamics is compatible with the periodicity. These questions were discussed in \cite{Taim5}, where, in particular, it was shown that it is possible to define dynamics preserving the conformal classes of the toric surfaces as well as their Willmore functional.
}  
  
In 2021 professor Iskander Taimanov celebrated his 60th birthday. To celebrate his very important results in the area of mathematical physics, and, in particular, his works dedicated to the doubly-periodic theory for the 2-d Dirac operator and DS2 equation, and their applications to the geometry,  we would like to dedicate this article to his 60th anniversary.

\section{Preliminaries}

The DS equations \eqref{eq:DS1} can be written as compatibility condition for the following pair of auxiliary linear operators:
\begin{align}
\label{eq:DS_lax}
  \nu\vec\psi_y &= i\sigma_3\vec\psi_x + U\vec\psi,\nonumber\\
  \vec\psi_t &= 2i\sigma_3\vec\psi_{xx} + 2U\vec\psi_x + V\vec\psi,
\end{align}
where
\begin{align}
\label{eq:DS_lax2}
U &= \begin{pmatrix} 0 & u \\ -\eta \bar u & 0  \end{pmatrix}, \ \
                                            V = \begin{pmatrix} -\eta(w-iq) & u_x-i\nu u_y \\ -\eta (\bar u_x+i\nu \bar u_y) &  -\eta(w+iq)  \end{pmatrix},\nonumber\\
  \nu w_y &= (q -|u|^2 )_x, \ \ w_x=-\nu (q+|u|^2)_y.
\end{align}

Equations (\ref{eq:DS1}) are non-local, and the function $q$ is defined up to an arbitrary integration constant
\begin{equation}
 \label{eq:DS_q1} 
q (x,y,t) \rightarrow  q(x,y,t) + f(t).
\end{equation}
Let us point out that this change of integration constant corresponds to the standard gauge transformation:

\begin{equation}
 \label{eq:DS_gauge} 
u(x,y,t) \rightarrow u(x,y,t) \exp\left(-i\frac{\eta}{2}\int^t f(\tau) d\tau   \right).
\end{equation}

The DS equation has the following real forms:
\begin{enumerate}
\item  Davey-Stewardson I (DS1): $\nu=i$, i.e. $\nu^2=-1$;
\item  Davey-Stewardson II (DS2): $\nu=1$, i.e. $\nu^2=1$.  
\end{enumerate}
By analogy with Nonlinear Schr\"odinger equation, the DS2 equation can be either \textbf{self-focusing} ($\eta=1$) or
 \textbf{defocusing} ($\eta=-1$) .

In the Fourier representation we have:
\begin{equation}
 \label{eq:DS_q2} 
\hat q (\vec k) = Q\cdot \widehat{|u|^2}(\vec k), \ \  Q=\frac{k_x^2- \nu^2k_y^2 }{k_x^2+\nu^2k_y^2 }.
\end{equation}

In the DS1 case the linear operator $Q$ is unbounded, and it results in very non-trivial analytic effects. For the infinite $(x,y)$-plane problem  one has to impose additional boundary conditions at infinity; in particular, by a proper choice of the boundary conditions at infinity, one can generate exponentially localized solutions (dromions) \cite{Fokas1,Fokas2}, first discovered via B\"acklund transformations \cite{BLMP}. In the doubly-periodic case it is not clear if the DS1 system is well defined, and as far as we know, this problem was not properly studied in the literature.

In contrast with the DS1 case, for the DS2 with spatially doubly-periodic boundary conditions, the linear operator $Q$ is well-defined and has unit norm on the space of functions with zero mean value $L^2_0(T^2)$. It can be naturally extended to a linear map
\begin{equation}
\label{eq:DS_q3} 
Q:L^2(T^2)\rightarrow L^2_0(T^2),
\end{equation}
by assuming that
\begin{equation}
\label{eq:DS_q4} 
\iint_{T^2} q(x,y,t) dx dy = 0, \ \ \mbox{for all} \ \ t.
\end{equation}
Due to the gauge freedom (\ref{eq:DS_gauge}), constraint  (\ref{eq:DS_q4}) is not restrictive, and after imposing it, the DS2 flow becomes well-defined.  

Let us recall that 2+1 integrable systems are usually non-local. The most famous example is the Kadomtsev-Petviashvily hierarchy. 

Consider a small perturbation of a constant DS2 solution:
\begin{equation}
\label{eq:DS_q5} 
u(x,y,0) = a + \varepsilon v(x,y).
\end{equation}
The linearized DS2 equation has the following form:
\begin{equation}
\label{eq:DS_lin1} 
i v_t + v_{xx}-v_{yy} + 2 \eta \, Q\cdot [|a|^2 v + a^2 \bar v].
\end{equation}
For a monochromatic perturbation
\begin{equation}
\label{eq:DS_q7} 
v(x,y,t)=  u_1 \exp(i[k_x x + k_y  y]+ \sigma t) + u_{-1} \exp(-i[k_x x + k_y  y]+ \sigma t),
\end{equation}
we obtain
\begin{align}
\label{eq:DS_lin2} 
&[i\sigma -k_x^2+k_y^2] u_1 + 2 \eta \frac{k_x^2- k_y^2 }{k_x^2+ k_y^2 }[|a|^2 u_1 + a^2 \bar u_{-1} ], \\ 
&[i\sigma +k_x^2-k_y^2] \bar u_{-1} - 2 \eta \frac{k_x^2- k_y^2 }{k_x^2+ k_y^2 }[|a|^2 \bar u_{-1} + \bar a^2 u_{1} ].
\end{align}
therefore
\begin{align}
\label{eq:DS_lin3} 
\sigma = \pm \frac{(k_x^2- k_y^2 )\sqrt{4\eta |a|^2 - (k_x^2+ k_y^2)}} {\sqrt{k_x^2+ k_y^2 }}.
\end{align}
We see that in the defocusing case $\eta=-1$ the increment $\sigma$ is imaginary for all wave vectors $(k_x,k_y)$ and the constant solution is linearly stable. In the self-focusing case  $\eta=1$ the harmonic perturbations inside the disk $k^2 \le 4 |a|^2$ are unstable and outside this disk harmonic perturbations are stable. In the doubly-periodic problem we have a finite number of unstable modes.

Therefore, from the point of view of the anomalous waves theory, the most important real form of the DS equation is the focusing DS2 equation:  $\nu=1$, $\eta=1$. This real form also appears in the theory of surfaces in $\RR^4$ defined using the Generalized Weierstrass representation, see \cite{Taim6} and references therein.

{\color{Violet}
In contrast with the focusing NLS equation, DS2 solutions with smooth Cauchy data may blow up in finite time {\color{red}\cite{Ozawa,Otha2}}, and this fact may be important in physical applications. This problem is discussed, in particular, in the book \cite{KleinSaut}. It is possible to construct blow-up solutions using the Moutard transformations; for the modified Novikov-Veselov equation it was done in \cite{TaimTz}. A beautiful geometric model for this generation of singularities was suggested in \cite{MatTaim,Taim9,Taim10}. Consider an immersion of a surface in $\RR^4$. It is possible to construct explicitly the Moutard transformation corresponding to the inversion of the space $\RR^4$. If the original family of surfaces passes through the origin, after inversion the family blows up for some $t=t_0$, as well as the corresponding DS2 solution.

In our text we do not discuss the blow-up of the DS2 solutions corresponding to the Cauchy problem of anomalous waves. We postpone this interesting issue to a subsequent paper.
}

\section{AWs doubly-periodic Cauchy problem for the focusing DS2 equation} 

We study the spatially doubly-periodic Cauchy problem for the focusing DS2 equation
\begin{align}
\label{eq:DS2}
&iu_t + u_{xx} - u_{yy} + 2 q u = 0, \nonumber\\
&q_{xx} + q_{yy} = \left(|u|^2\right)_{xx} - \left(|u|^2\right)_{yy},\nonumber\\
  &u = u(x, y, t) \in \CC,  \ \ q = q(x, y, t) \in \RR,\\
  &u(x+L_x,y,t)=u(x,y+L_y,t)=u(x,y,t), \nonumber\\
  &q(x+L_x,y,t)=q(x,y+L_y,t)=q(x,y,t),\nonumber
\end{align}
assuming that the Cauchy data is a small perturbation of a constant solution
\begin{align}
\label{eq:DS2_1}
& u(x,y,0) = a + \varepsilon v_0(x,y), \ \ \varepsilon\in\RR, \ \ \varepsilon \ll 1,\\
  &v_0(x+L_x,y)=v_0(x,y+L_y)=v_0(x,y). \nonumber
\end{align}
We call equation (\ref{eq:DS2}) with the initial data  (\ref{eq:DS2_1}) the \textbf{doubly-periodic Cauchy problem for anomalous waves.} 
The first auxiliary linear problem can be rewritten as:

\begin{equation}
\label{eq:DS_lax3}
  \begin{bmatrix} \partial_x+ i\partial_y & u \\ -\bar u  &   \partial_x- i\partial_y \end{bmatrix}  
    \begin{bmatrix} \psi_1  \\ \psi_2 \end{bmatrix}  = 0, 
\end{equation}
where 
\begin{equation}
\label{eq:DS_lax4}  
   \vec\psi = \begin{bmatrix} \psi_1  \\ i \psi_2 \end{bmatrix}. 
\end{equation}

By analogy with {\color{Violet} \cite{Taim1}, in some situations}  it is convenient to introduce the complex notation 
\begin{equation}
\label{eq:complex1}  
z= x+ i y, \ \ \bar z = x - i y, \ \  \partial_z = \frac{1}{2}(\partial_x-i\partial_y), \ \
\partial_{\bar z} = \frac{1}{2}(\partial_x+i \partial_y).
\end{equation}
To simplify our formulas we will write $u(z)$ instead of  $u(z,\bar z)$ without assuming that $u(z)$ is holomorphic. In the complex notation equation (\ref{eq:DS_lax3}) reads:
\begin{equation}
\label{eq:DS_lax5}
  \begin{bmatrix} 2\partial_{\bar z} & u(z) \\ -\bar u(z)  &   2\partial_{z}  \end{bmatrix}  
    \begin{bmatrix} \psi_1  \\ \psi_2 \end{bmatrix}  = 0, 
\end{equation}

\section{Finite-gap DS2 solutions}

In this Section we recall the construction of finite-gap at zero energy 2-D Dirac operators and the corresponding DS2 solutions. The finite-gap formulas for 2-D Dirac operators with the additional constraint $u=\bar u$ the finite-gap formulas were obtained in \cite{Taim3}, see also \cite{Taim6}.   

Let us point out that in the periodic theory of 2-D Dirac operator one can use different normalization of the wave function. In  \cite{Taim3,Taim5,Taim6} the following one was used:
\begin{equation}\label{eq:norm1}
\psi_1(\gamma,0) + \psi_2(\gamma,0) \equiv 1.
\end{equation}  
In the present text (analogously to \cite{GS1,GS5}) we work with the non-symmetric normalization for the wave function:
\begin{equation}\label{eq:norm2}
\psi_1(\gamma,0) \equiv 1.
\end{equation}
The main difference between these two normalizations is the following:
\begin{enumerate}
\item If the symmetric normalization (\ref{eq:norm1}) is used, the degree of the divisor is $g+1$, where $g$ is the genus of the spectral curve, and for small perturbations of the constant potential al least one of the divisor points is located far from the resonant point. If one uses  (\ref{eq:norm2}) instead, then the degree of the divisor is equal to the genus of the spectral curve, and for a small perturbation of the constant potential all divisor points are located near the resonant points.
\item If normalization (\ref{eq:norm1}) is used, the spectral data completely determines the potential $u(z)$. If one uses   (\ref{eq:norm2}), then the potential is determined by the spectral data only up to a constant phase factor, which should be added as additional parameter. As we pointed out above, the DS2 solutions are defined up to a phase factor, which is an arbitrary function of time, and the non-symmetric normalization  (\ref{eq:norm2}) ``hides'' this gauge freedom.
\end{enumerate}

The construction of finite-gap solutions for the DS2 equation consists of two steps:
\begin{enumerate}
\item Starting from a spectral curve with marked points and divisor, the ``complex'' Dirac operator
\begin{equation}
\label{eq:DS_lax6}
  \begin{bmatrix} 2\partial_{\bar z} & u(z) \\ v(z)  &   2\partial_{z}  \end{bmatrix}  
  \begin{bmatrix} \psi_1  \\ \psi_2 \end{bmatrix}  = 0,
\end{equation}
is constructed. Here ``complex'' means that the functions $u(z)$, $v(z)$ are independent. 
\item Additional conditions on the spectral curve and the divisor implying $v(z) = - \bar u(z)$ (or $v(z) = \bar u(z)$ in the defocusing case) are imposed. It is easy to check that these conditions are invariant with respect to the time evolution.
\end{enumerate}  

\subsection{Step 1. Complex Dirac operators.}

Consider the following collection of spectral data:
\begin{enumerate}
\item An algebraic Riemann surface $\Gamma$ of genus $g$ with two marked points $\infty_1$, $\infty_2$;
\item Local parameters $1/\lambda_1$, $1/\lambda_2$, near these  marked points;
\item A generic divisor ${\cal D} = \gamma_1+\ldots+\gamma_g$ of degree $g$ on $\Gamma$.  
\end{enumerate}
Then for generic data there exists an unique pair of functions $\psi_1(\gamma,z)$, $\psi_2(\gamma,z)$ such that:
\begin{enumerate}
\item They are holomorphis on $\Gamma$ outside the points $\infty_1$, $\infty_2$, $\gamma_1$, \ldots, $\gamma_g$;
\item They have first-order poles at the divisor points  $\gamma_1$, \ldots, $\gamma_g$;
\item They have the following essential singularities at the points  $\infty_1$, $\infty_2$:
\begin{equation}
  \label{eq:ba_1}
  \begin{bmatrix} \psi_1(\gamma,z)  \\ \psi_2(\gamma,z) \end{bmatrix}  =
  \begin{bmatrix} 1 + \frac {\xi_1^+(z)}{\lambda_1} +  \frac {\xi_2^+(z)}{\lambda_1^2} + \ldots) \\ \frac {\xi_1^-(z)}{\lambda_1} +  \frac {\xi_2^-(z)}{\lambda_1^2}+\ldots
  \end{bmatrix} e^{\lambda_1 z}, \ \ \mbox{as} \ \ \gamma\rightarrow\infty_1,
\end{equation}
\begin{equation}
  \label{eq:ba_2}
  \begin{bmatrix} \psi_1(\gamma,z)  \\ \psi_2(\gamma,z) \end{bmatrix}  =
  \begin{bmatrix} \chi_0^+(z) + \frac {\chi_1^+(z)}{\lambda_2} + \ldots) \\ X_{-1}\lambda_2+  \chi_0^-(z)+ \frac {\chi_1^-(z)}{\lambda_2} +\ldots
  \end{bmatrix} e^{\lambda_2 \bar z}, \ \ \mbox{as} \ \ \gamma\rightarrow\infty_2.
\end{equation}
In (\ref{eq:ba_1}), (\ref{eq:ba_2}) we assume that the pre-exponential terms are locally meromorphic (holomorphic) near the points $\infty_1$, $\infty_2$, $X_{-1}$ is a fixed constant, the expansion coefficients $\xi_j^+(z)$, $\xi_j^-(z)$,  $\chi_j^+(z)$, $\chi_j^-(z)$, are a priori unknown.
\end{enumerate}

Then using standard arguments one proves that $\psi_1(\gamma,z)$, $\psi_2(\gamma,z)$ satisfies (\ref{eq:DS_lax6}) with
\begin{equation}
  \label{eq:ba_3}
  u(z) =- 2  \chi_0^+(z)/ X_{-1},\ \  v(z) = -2 \xi_1^-(z).
\end{equation}

\subsection{Real reductions of DS2}

Assume now that the finite-gap spectral data satisfy the following additional reality conditions:
\begin{enumerate}
\item The spectral curve $\Gamma$ admits an antiholomorphic involution $\sigma:\Gamma\rightarrow\Gamma$, $\sigma^2=\mbox{id}$ such that:
\begin{itemize}
\item It interchanges the marked points: $\sigma\infty_j=\infty_{3-j}$, $j=1,2$;
\item $\sigma\lambda_1=\bar\lambda_2$, $\sigma\lambda_2=\bar\lambda_1$;  
\end{itemize}  
\item The divisor ${\cal D}$ satisfies the following reality condition:
  \begin{equation}
  \label{eq:cher0}
  {\cal D} - \sigma{\cal D} + \infty_2-\infty_1=0.
\end{equation} 
Equivalently, condition (\ref{eq:cher0}) means that there exists a function $f(\gamma)$ on $\Gamma$ such that
\begin{itemize}\label{item:funct_f}
\item $f(\gamma)$ has a first-order zero at $\infty_1$ and a first-order pole at $\infty_2$;
\item $f(\gamma)$ has first-order poles at ${\cal D}$ and first-order zeroes at $\sigma{\cal D}$. 
\end{itemize}
\end{enumerate}
\begin{lemma}
  Let $f(\gamma)$ be the function with the poles at  ${\cal D}$,  $\infty_2$ and zeroes at $\sigma{\cal D}$,  $\infty_1$ with the normalization:
 \begin{equation} \label{eq:cher01}
   f(\gamma) = \lambda_2 + O(1) \ \ \mbox{near the point} \ \ \infty_2.   
\end{equation}   
Denote by $\cal C$ the coefficient at the leading term of the expansion for $f(\gamma)$ near $\infty_1$:
 \begin{equation} \label{eq:cher01}
   f(\gamma) = \frac{\cal C}{\lambda_1} + O\left(\frac{1}{\lambda_1^2}\right) \ \ \mbox{near the point} \ \ \infty_1.   
\end{equation}  
Then for generic data the constant $\cal C$ is real.
\end{lemma}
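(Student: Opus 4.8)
The plan is to exploit the antiholomorphic involution $\sigma$ to manufacture, out of $f$, a second meromorphic function on $\Gamma$ whose divisor is exactly the negative of the divisor of $f$. The product of the two will then be a function with neither zeros nor poles, hence a nonzero constant, and computing this constant separately at $\infty_1$ and at $\infty_2$ will force ${\cal C}=\overline{\cal C}$.

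First I would set $g(\gamma):=\overline{f(\sigma\gamma)}$. Since $\sigma$ is antiholomorphic and complex conjugation on the target Riemann sphere is antiholomorphic as well, the composition $g$ is holomorphic wherever it is defined, hence a single-valued meromorphic function on $\Gamma$. Both $\sigma$ and conjugation preserve the orders of zeros and poles, so $\mathrm{ord}_{\gamma_0}g=\mathrm{ord}_{\sigma\gamma_0}f$, i.e. the divisor of $g$ is obtained from that of $f$ by applying $\sigma$ pointwise. By construction $(f)=\sigma{\cal D}+\infty_1-{\cal D}-\infty_2$, so applying $\sigma$ to each point and using $\sigma^2=\mathrm{id}$, $\sigma\infty_1=\infty_2$, $\sigma\infty_2=\infty_1$ gives $(g)={\cal D}+\infty_2-\sigma{\cal D}-\infty_1=-(f)$. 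Therefore $f\cdot g$ has trivial divisor, and on the compact surface $\Gamma$ a meromorphic function without zeros or poles is constant; calling it $c$, we obtain the identity $f(\gamma)\,\overline{f(\sigma\gamma)}\equiv c$.

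Finally I would read off $c$ at the two marked points, using the transformation law for the local parameters encoded in $\sigma\lambda_1=\bar\lambda_2$, namely $\lambda_2(\sigma\gamma)=\overline{\lambda_1(\gamma)}$ near $\infty_1$ and $\lambda_1(\sigma\gamma)=\overline{\lambda_2(\gamma)}$ near $\infty_2$. Letting $\gamma\to\infty_1$, the leading behaviour $f(\gamma)={\cal C}/\lambda_1+\cdots$ together with $f(\sigma\gamma)=\lambda_2(\sigma\gamma)+\cdots=\overline{\lambda_1(\gamma)}+\cdots$, hence $\overline{f(\sigma\gamma)}=\lambda_1+\cdots$, yields $c={\cal C}$. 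Letting $\gamma\to\infty_2$, the normalization $f(\gamma)=\lambda_2+\cdots$ together with $\overline{f(\sigma\gamma)}=\overline{\cal C}/\lambda_2+\cdots$ yields $c=\overline{\cal C}$. Comparing the two gives ${\cal C}=\overline{\cal C}$, so ${\cal C}$ is real.

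The genericity hypothesis enters only to guarantee that the class $\sigma{\cal D}+\infty_1-{\cal D}-\infty_2$ is genuinely realized by a function with simple poles and zeros at the stated points, so that $f$ exists, is unique up to scale (and thus pinned down by the normalization $f=\lambda_2+O(1)$ at $\infty_2$), and the leading coefficients in the expansions above are nonzero. The only real obstacle is bookkeeping: keeping straight how $\sigma$ followed by conjugation acts on $\lambda_1,\lambda_2$, which is precisely what produces $c={\cal C}$ at one marked point and $c=\overline{\cal C}$ at the other.
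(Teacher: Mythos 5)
Your proof is correct and follows essentially the same route as the paper: both exploit the antiholomorphic involution to build the meromorphic function $\overline{f(\sigma\gamma)}$ and then compare leading coefficients of the expansions at $\infty_1$ and $\infty_2$ using $\sigma\lambda_1=\bar\lambda_2$, $\sigma\lambda_2=\bar\lambda_1$. The only (minor) difference in mechanics is that you conclude via Liouville --- the product $f(\gamma)\,\overline{f(\sigma\gamma)}$ has trivial divisor on the compact curve, hence is a nonzero constant evaluated two ways --- whereas the paper forms the normalized function $f_1(\gamma)=\overline{\bigl({\cal C}/f(\sigma\gamma)\bigr)}$ and invokes the generic uniqueness of $f$; your variant needs only the existence of $f$, not its uniqueness, and gives ${\cal C}\neq 0$ for free.
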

\begin{proof}
  Consider the function
\begin{equation} \label{eq:cher02}
   f_1(\gamma) = \overline{\left(\frac{\cal C}{f(\sigma\gamma)} \right)}.   
\end{equation}  
It is holomorphic, has the same poles and zeroes as $f(\gamma)$ and 
  \begin{equation} \label{eq:cher03}
   f_1(\gamma) = \lambda_2 + O(1) \ \ \mbox{near} \ \ \infty_2, \ \  f_1(\gamma) = \frac{\bar{\cal C}}{\lambda_1} + O\left(\frac{1}{\lambda_1^2}\right) \ \ \mbox{near} \ \ \infty_1.   
 \end{equation}
For generic data the function $f(\gamma)$ is unique, therefore $f_1(\gamma)=f(\gamma)$, and $\bar{\cal C}= {\cal C}$. 
\end{proof}

\begin{lemma}
  Let the spectral data satisfy the reality conditions formulated in this Section, and the normalization constant $X_{-1}$ satisfies:
 \begin{equation}
  \label{eq:cher3}
  |X_{-1}|^2=-\eta {\cal C}^{-1}.
\end{equation} 
Then the potentials $u(z)$, $v(z)$  in (\ref{eq:DS_lax6}) satisfy the DS2 reduction
\begin{equation}
  \label{eq:cher4}
v(z)=-\eta\bar u(z).
\end{equation}
\end{lemma}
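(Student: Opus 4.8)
The plan is to use the uniqueness of the Baker--Akhiezer (BA) function together with the antiholomorphic involution $\sigma$ to manufacture a second BA pair out of the given one, match the two by uniqueness, and then read off the reduction by comparing the two Dirac systems. First I would introduce the transformed pair
$$\Psi_1(\gamma,z) = \overline{\psi_2(\sigma\gamma,z)}, \qquad \Psi_2(\gamma,z) = \overline{\psi_1(\sigma\gamma,z)}.$$
Conjugating the two scalar equations in (\ref{eq:DS_lax6}) and using $\overline{\partial_{\bar z}}=\partial_z$ shows that $(\Psi_1,\Psi_2)$ solves (\ref{eq:DS_lax6}) with the potentials swapped and conjugated, i.e. with $u\mapsto\bar v$ and $v\mapsto\bar u$; replacing $\gamma$ by $\sigma\gamma$ does not touch the $z$-operator. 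Since each $\psi_j(\cdot,z)$ is meromorphic on $\Gamma$ and $\sigma$ is antiholomorphic, $\Psi_j$ is again meromorphic in $\gamma$ (the conjugate of an antimeromorphic function), with its simple poles now located on $\sigma\mathcal{D}$, because $\sigma^2=\mathrm{id}$.

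Next I would compute the essential singularities of $\Psi_1,\Psi_2$ at $\infty_1,\infty_2$ from (\ref{eq:ba_1})--(\ref{eq:ba_2}), using $\sigma\infty_j=\infty_{3-j}$, the relations $\lambda_1(\sigma\gamma)=\overline{\lambda_2(\gamma)}$, $\lambda_2(\sigma\gamma)=\overline{\lambda_1(\gamma)}$, and $\overline{e^{\lambda z}}=e^{\bar\lambda\bar z}$; this flips the growth orders at the two marked points relative to $\psi$. Multiplying by the function $f(\gamma)$ of the Lemma then relocates the poles from $\sigma\mathcal{D}$ back to $\mathcal{D}$ (its zeros on $\sigma\mathcal{D}$ cancel the poles of $\Psi$, its poles on $\mathcal{D}$ create the required ones) and simultaneously restores the correct orders at the marked points, since $f\sim\lambda_2$ near $\infty_2$ and $f\sim\mathcal{C}/\lambda_1$ near $\infty_1$. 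After the explicit rescaling $\tilde\psi:=(\mathcal{C}\bar X_{-1})^{-1}f\,\Psi$, the pair $(\tilde\psi_1,\tilde\psi_2)$ carries exactly the analytic data of $(\psi_1,\psi_2)$: poles on $\mathcal{D}$, the same exponential types, $\tilde\psi_1\to 1$ at $\infty_1$, and leading coefficient of $\tilde\psi_2$ at $\infty_2$ equal to $\tilde X_{-1}=(\mathcal{C}\bar X_{-1})^{-1}$.

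I would then invoke the uniqueness of the BA function component-wise. Since the uniqueness is purely analytic, $\tilde\psi_1=\psi_1$ identically, while $\tilde\psi_2=\kappa\,\psi_2$ with $\kappa=\tilde X_{-1}/X_{-1}=(\mathcal{C}|X_{-1}|^2)^{-1}$, which is real by the Lemma (reality of $\mathcal{C}$). Writing out the Dirac system for $\psi$ (with $u,v$) and for $\tilde\psi$ (with $\bar v,\bar u$) and comparing, the $\psi_2$-terms give $u=\kappa\bar v$ and the $\psi_1$-terms give $\bar u=\kappa v$; these two relations are mutually consistent precisely because $\kappa$ is real. Hence $v=\kappa^{-1}\bar u$, and imposing the normalization (\ref{eq:cher3}), namely $|X_{-1}|^2=-\eta\,\mathcal{C}^{-1}$, forces $\kappa=-\eta$ and therefore $v(z)=-\eta\,\bar u(z)$, as claimed.

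I expect the main obstacle to be the bookkeeping in the middle step: correctly pushing complex conjugation through the antiholomorphic involution so that $\Psi$ comes out genuinely meromorphic in $\gamma$, and extracting the exact leading coefficients of the essential singularities, since it is the precise constant $\mathcal{C}\bar X_{-1}$ appearing there that ultimately pins down the sign of $\eta$ in the reduction. The reality of $\mathcal{C}$ established in the Lemma is indispensable, as it is exactly what guarantees that $u=\kappa\bar v$ and $\bar u=\kappa v$ are compatible rather than overdetermined.
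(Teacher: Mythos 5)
Your proposal is correct and takes essentially the same route as the paper: the paper likewise uses the involution--conjugation symmetry together with the function $f(\gamma)$, defining $\psi_2(\gamma,z)=X_{-1}f(\gamma)\overline{\psi_1(\sigma\gamma,z)}$, checking it has the analytic properties characterizing $\psi_2$ (your uniqueness step), and reading off the reduction. The only cosmetic difference is the end-game: the paper compares Laurent coefficients at the marked points, obtaining $\xi_1^-(z)=X_{-1}\mathcal{C}\,\overline{\chi_0^+(z)}$ and concluding via $u=-2\chi_0^+/X_{-1}$, $v=-2\xi_1^-$, whereas you compare the two Dirac systems satisfied by the matched pair --- both give $v=-\eta\bar u$ once $|X_{-1}|^2=-\eta\mathcal{C}^{-1}$ is imposed, with the reality of $\mathcal{C}$ playing the same consistency role in each.
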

\begin{proof}
  If the spectral data satisfies the reality conditions, then assume
\begin{equation}
  \label{eq:cher4}
  \psi_2(\gamma,z) = X_{-1} f(\gamma)\overline{\psi_1(\sigma\gamma,z)}.
\end{equation}  
The function $\psi_2(\gamma,z)$ defined by (\ref{eq:cher4}) has the correct analytic properties, and
\begin{equation}
  \label{eq:cher5}
  \xi_1^+(z) = X_{-1} X{\cal C} \overline{\chi_0^+(z)}  = -\frac{\overline{\chi_0^+(z)}}{\eta \overline{ X_{-1}}}.
\end{equation}  
Taking into account that 
\begin{equation}
  \label{eq:cher5}
 u(z) =- 2 \frac{\chi_0^+(z)}{X_{-1}},\ \  v(z) = -2 \xi_1^-(z)=  2 \frac{ \overline{\chi_0^+(z)}}{\eta \overline{ X_{-1}}},
\end{equation}
we complete the proof.
\end{proof}
Let us remark that for $z=0$ we have:
$$
\psi_1(\gamma,0)\equiv 1, \ \ \chi_0^+(0)=1,
$$
therefore
\begin{equation}\label{eq:xi-1}
X_{-1}=- 2 / u(0).
\end{equation}
{\color{Violet}
\begin{remark}
The antiholomorphic involution $\sigma$ in our paper coincides with the product of the holomorphic involution $\sigma$ and antiholomorphic involution $\tau$ in \cite{Taim3}. Therefore the reality condition (\ref{eq:cher0}) on the divisor coincides with the consequence of the corresponding pair of constraints on the divisor associated with these involutions in \cite{Taim3}.
  
\end{remark}  
}  
\begin{remark}
  In the finite-gap theory of soliton equations we often meet the following situation. It is sufficiently easy to construct solutions of a generalization of the original system, but the selection of the spectral data corresponding to the solutions of original system requires additional efforts.

  One of the problems of such type is the selection of real solutions (or real regular ones). It is well-known that, for the Korteweg-de Vries equation, defocusing Nonlinear Schr\"odinger equation, $\sinh$-Gordon equation, Kadomtsev-Petviashvili 2 equation, the real solutions correspond to complex curves with antiholomorphic involution (complex conjugation), and divisors invariant with respect to this complex conjugation. But, for equations like self-focusing Nonlinear Schr\"odinger equation,  $\sin$-Gordon equation, Kadomtsev-Petviashvili 1 equation, the characterization of divisors corresponding to real soltuions is much less trivial. Again the spectral curve admits a complex conjugation, but the conditions on the divisor are much less trivial. For self-focusing Nonlinear Schr\"odinger equation and  $\sin$-Gordon equation these conditions were first found in \cite{Cher}, and the answer was formulated in terms of the some meromorphic differential, now known as Cherednik differential. 

{\color{Violet} Let us remark that the condition (\ref{eq:cher0}) is neither a ``naive'' reality condition analogous to the defocusing NLS, nor a Cherednik type condition, because  (\ref{eq:cher0}) does not include the canonical class of the surface. 
}

  Another problem of such type is the selection of solutions such that some of the fields are identically zero. An important problem of this type is the fixed-energy periodic problem for the 2-dimensional stationary Schr\"odinger operator. Finite-gap at one energy operators with non-zero magnetic field were constructed in 1976 in \cite{DKN}, but the selection of spectral data corresponding to operators with identically zero magnetic field took almost 7 years; and the solution, obtained in \cite{VN1,VN2}, again was formulated in terms of a Cherednik differential. Selection of spectral data generating pure magnetic operators was obtained much later in \cite{GMN}, and it used a completely different approach.
\end{remark}

\section{Theta-functional formulas}

Denote by $\omega_j$ the basis of holomorphic differentials:
\begin{equation}\label{eq:omegas1}
\oint\limits_{a_j} \omega_k= 2\pi i \delta_{jk}, \ \ \oint\limits_{b_j} \omega_k= b_{jk}, 
\end{equation}
where $b_{jk}$ denotes the Riemann matrix of periods. Let us recall that this matrix is symmetric $b_{jk}=b_{kj}$ and its real part is negatively defined.
\begin{remark} Let us recall that there are two standard normalizations of basic holomorphic differentials in the literature: either (\ref{eq:omegas1}) (see  \cite{BBEIM,Dubrovin2,Fay} )  or (\ref{eq:omegas1bis}) (see \cite{Mum1983} )
\begin{equation}\label{eq:omegas1bis}
\oint\limits_{a_j} \omega_k= \delta_{jk},
\end{equation}
and before using the theta-functional formulas it is important to check which of these normalizations is used.
\end{remark}

We also need the following meromorphic differentials $\Omega_0$, $\Omega_z$,  $\Omega_{\bar z}$,  $\Omega_t$ such that:
\begin{enumerate}
\item They are holomorphic outside the marked points $\infty_1$,  $\infty_2$;
\item They have the following asymptotics near the marked points:   
  \begin{align}\label{eq:omegas2}
    \Omega_0 &= \left\{\begin{array}{ll} \left[-\frac{1}{\lambda_1} + O\left(\frac{1}{\lambda_1^2} \right) \right] d\lambda_1  & \mbox{  at  } \infty_1, \\ \\ \left[\frac{1}{\lambda_2} + O\left(\frac{1}{\lambda_2^2} \right) \right] d\lambda_2 & \mbox{  at  } \infty_2,  \end{array}   \right.\\
    \Omega_z &= \left\{\begin{array}{ll} \left[1 + O\left(\frac{1}{\lambda_1^2} \right) \right]d\lambda_1  & \mbox{  at  } \infty_1, \\ O\left(\frac{1}{\lambda_2^2} \right) d\lambda_2 & \mbox{  at  } \infty_2,  \end{array}   \right.\\
   \Omega_{\bar z} &= \left\{\begin{array}{ll} O\left(\frac{1}{\lambda_1^2} \right) d\lambda_1  & \mbox{  at  } \infty_1,  \\ \left[1 + O\left(\frac{1}{\lambda_2^2} \right) \right] d\lambda_2 & \mbox{  at  } \infty_2,  \end{array}   \right.\\
    \Omega_t &= \left\{\begin{array}{ll} \left[ 4i \lambda_1 + O\left(\frac{1}{\lambda_1^2} \right) \right] d\lambda_1  & \mbox{  at  } \infty_1, \\ \\ \left[ -4i \lambda_2 + O\left(\frac{1}{\lambda_2^2} \right) \right] d\lambda_2 & \mbox{  at  } \infty_2;  \end{array}   \right.
\end{align}
\item All $a$-periods of $\Omega_0$, $\Omega_z$,  $\Omega_{\bar z}$,  $\Omega_t$ are equal to zero:
\begin{equation}\label{eq:omegas3}
\oint\limits_{a_j} \Omega_0= \oint\limits_{a_j} \Omega_z= \oint\limits_{a_j} \Omega_{\bar z}=\oint\limits_{a_j} \Omega_{t}= 0, \ \ j=1,\ldots,g. 
\end{equation}  
\end{enumerate}
Without loss of generality we may assume that the starting point of the Abel transform coincides with $\infty_1$, and we fixe a path  $\cal P_0$ connecting $\infty_1$ with $\infty_2$. Consider the antiderivatives of these differentials in a neighbourhood of this path
$$
\Omega_0=dF_0, \ \ \Omega_z=dF_z, \ \ \Omega_{\bar z}=dF_{\bar z}, \ \ \Omega_t=dF_t,
$$
and assume that the integration constants are chosen by assuming that near $\infty_1$:
  \begin{equation}\label{eq:omegas4}
    F_0 = -\log{\lambda_1} + o(1), \ \  F_z = \lambda_1 +o(1), \ \ F_{\bar z} = o(1), \ \ 
    F_t = 2i \lambda_1^2 +  o(1).
\end{equation}
Then the constants ${\cal C}_0$,  ${\cal C}_{z}$,  ${\cal C}_{\bar z}$,  ${\cal C}_{t}$ are defined by the expansions of these aniderivatives near  $\infty_2$: 
\begin{equation}\label{eq:omegas5}
    F_0 = \log{\lambda_2} + {\cal C}_0+ o(1), \  F_z = {\cal C}_{z} +o(1), \ F_{\bar z} =\lambda_2+{\cal C}_{\bar z}+ o(1), \ F_t = -2i \lambda_2^2 + {\cal C}_{t} + o(1).
\end{equation}

The eigenfunctions of the zero-curvature representation are given by the Its formula, {\color{Violet} provided for the 2-d Dirac operator in \cite{Taim3})}:
\begin{align}
  &\psi_1(\gamma,z,t)=\exp\left[z\int^{\gamma}\Omega_z+ \bar z\int^{\gamma}\Omega_{\bar z}+ t\int^{\gamma}\Omega_t \right]\times\label{eq:psi1} \\ & \times \frac{\theta(\vec A(\gamma) + \vec W_z z + \vec W_{\bar z} \bar z + \vec W_{t} t  - \vec A({\cal D}) -\vec K)\,\theta( - \vec A({\cal D}) -\vec K) }   {\theta(\vec A(\gamma) - \vec A({\cal D}) -\vec K)\,\theta( \vec W_z z + \vec W_{\bar z} \bar z + \vec W_{t} t  - \vec A({\cal D}) -\vec K)}\nonumber
  \\
  & \psi_2(\gamma,z,t) = X_{-1} \, f(\gamma)\, \overline{\psi_1(\gamma,z,t)}, \ \ \mbox{where},\\
  & f(\gamma) = {\cal C}_0^{-1} \exp\left[\int^{\gamma}\Omega_0\right]  \frac{\theta(\vec A(\gamma) + \vec W_0 - \vec A({\cal D}) -\vec K)\,\theta(\vec A(\infty_2) - \vec A({\cal D}) -\vec K)}   {\theta(\vec A(\gamma) - \vec A({\cal D}) -\vec K)\,\theta(\vec A(\infty_2) + \vec W_0 - \vec A({\cal D}) -\vec K)}.\label{eq:psi2}
\end{align}
Here we assume that the integration constants for integrals in (\ref{eq:psi1}), (\ref{eq:psi2}) are chosen as in (\ref{eq:omegas4}). We also assume that in  (\ref{eq:psi2}) the path connecting $\infty_1$ with $\gamma$ is the path $\cal P_0$ plus a path connecting $\infty_2$ with $\gamma$.

Expanding (\ref{eq:psi1}) near $\infty_2$ we obtain:
\begin{align}
  &\chi_0^+(z,t) = \exp\left[z {\cal C}_z + \bar z {\cal C}_{\bar z} + t {\cal C}_t \right]\times \label{eq:psi3} \\
  & \times \frac{\theta(\vec A(\infty_2) + \vec W_z z + \vec W_{\bar z} \bar z + \vec W_{t} t  - \vec A({\cal D}) -\vec K)\,\theta( - \vec A({\cal D}) -\vec K) }   {\theta(\vec A(\infty_2) - \vec A({\cal D}) -\vec K)\,\theta( \vec W_z z + \vec W_{\bar z} \bar z + \vec W_{t} t  - \vec A({\cal D}) -\vec K)},
\end{align}
and, then, taking into account (\ref{eq:xi-1}),
\begin{align}
  &u(z,t) = \exp\left[z {\cal C}_z + \bar z {\cal C}_{\bar z} + t {\cal C}_t \right]\times \label{eq:psi3} \\
  & \times \frac{\theta(\vec A(\infty_2) + \vec W_z z + \vec W_{\bar z} \bar z + \vec W_{t} t  - \vec A({\cal D}) -\vec K)\,\theta( - \vec A({\cal D}) -\vec K) }   {\theta(\vec A(\infty_2) - \vec A({\cal D}) -\vec K)\,\theta( \vec W_z z + \vec W_{\bar z} \bar z + \vec W_{t} t  - \vec A({\cal D}) -\vec K)} u(0,0).\nonumber
\end{align}
{\color{Violet}
Let us recall that the Riemann theta-function is defined as the following Fourier series \cite{BBEIM,Dubrovin2,Fay}:
\begin{equation}
\label{eq:theta0}
\theta(\vec z)=\theta(\vec z|B)=\sum\limits_{\begin{array}{cc}n_j\in\ZZ \\ j=1,\ldots,g\end{array}}
\exp{\left[ \frac{1}{2} \sum_{j,k=1}^g b_{jk}n_j n_k +  \sum_{j=1}^g n_j z_j  \right] },
\end{equation}
$$
\vec z = (z_1,\ldots,z_g) \in \CC^g.
$$
}  

\section{Direct spectral transform}
For the spatially one-dimensional problems the existence of the spectral curve for a generic periodic potential can be proved rather easily using the classical theory of ordinary differential equations. In contrast with the one-domensional case, the two-dimensional requires much more serious analytic tools. Currently, two main approaches are used. In \cite{Taim2,Taim6} the spectral curve for the doubly-periodic problem (\ref{eq:DS_lax5}) was constructed using the Keldysh theorem for holomorphic Fredholm operator pencils. In \cite{Krichever2,Krichever3} the spectral curve for the non-stationary heat conductivity operator and for two-dimensional doubly-periodic Schr\"odinger operator at a fixed energy level was constructed using the perturbation theory, and in our paper we follow this approach.
\subsection{Unperturbed spectral curve}

Let us remark that, in contrast with the one-dimensional case, the existence of a ``good'' spectral curve essentially depends on the analytic properties of the operators. For example, for the Lax operators associated with DS1 equation of Kadomtsev-Petviashvily 1 equation, the curve constructed using the aforementioned methods does not admit good local compactification \cite{Taim6}. 

Using the scaling properties of the DS equation, we may assume without loss of generality that $a=1$. Denote by $\cal L$ the Dirac operator multiplied by $\sigma_3$
\begin{equation}
\label{eq:DS_lax4}
 {\cal L}= {\cal L}_0 + \epsilon  {\cal L}_1, \ \  {\cal L}_0 = \begin{bmatrix} \partial_x+ i\partial_y & 1 \\ 1 &   - \partial_x+ i\partial_y \end{bmatrix}, \ \ {\cal L}_1 = \begin{bmatrix} 0  & v(x,y) \\ \bar v(x,y)  &  0 \end{bmatrix}.
\end{equation}
The zero Bloch eigenfunctions for ${\cal L}_0$ are:
\begin{equation}\label{eq:DS_lax4.1}
\Psi_0(p,x,y,t) = \begin{bmatrix} 1 \\ - ip +q \end{bmatrix} \exp\left[i\left(px+ q y\right)-2pqt \right], \ \ \mbox{with} \ \ p^2+q^2=1, \ \ p,q\in\CC, 
\end{equation}
therefore the unperturbed spectral curve $\Gamma_0$ is defined by:
\begin{equation}\label{eq:DS_lax4.2}
p^2+q^2=1,
\end{equation}
and it can be conveniently parameterized as
\begin{equation}\label{eq:param}
p = \frac{1}{2}\left[\tau +\frac{1}{\tau} \right], \ \ q= -\frac{i}{2}\left[\tau -\frac{1}{\tau} \right],
\end{equation}
i.e.
\begin{equation} 
  \tau = p + i q, \ \  \frac{1}{\tau} = p - iq.
\end{equation}
The unperturbed wave function can also be written as:
\begin{equation}
\Psi_0(\tau,z,t) = \begin{bmatrix} 1 \\ -i\tau \end{bmatrix} \exp\left[\frac{i}{2}\left(\tau\bar z + \frac{z}{\tau}+\left(\tau^2-\frac{1}{\tau^2} \right)t  \right) \right].
\end{equation}

The marked points and the local parameters are respectively:
\begin{equation}
  \infty_1: \ \tau=0, \ \ \infty_2: \ \tau=\infty, \ \ \lambda_1=\frac{i}{2\tau}, \ \ \lambda_2=\frac{i\tau}{2}.
\end{equation}
Eigenfunctions (\ref{eq:DS_lax4.1}) are Bloch periodic
\begin{equation}\label{eq:DS_lax4.3}
\Psi_0(p,x+L_x,y,t)= \varkappa_x \Psi_0(p,x,y,t), \  \ \Psi_0(p,x,y+L_y,t)= \varkappa_y \Psi_0(p,x,y,t), 
\end{equation}
with the Bloch multipliers
\begin{equation}\label{eq:DS_lax4.4}
\varkappa_x = \exp\left[ ip L_x \right], \ \ \varkappa_y = \exp\left[ iq L_y \right].
\end{equation}
Following \cite{Krichever2,Krichever3} consider the image of $\Gamma_0$ under the map (\ref{eq:DS_lax4.4}). A pair of points $\tau_1=p_1+iq_1$,  $\tau_2=p_2+iq_2$ is called \textbf{resonant} if the images of these points coincide, namely
\begin{equation}\label{eq:DS_lax4.5}
\varkappa_x(\tau_1)  =  \varkappa_x(\tau_2), \ \ \varkappa_y(\tau_1)  =  \varkappa_y(\tau_2).
\end{equation}
Doubly-periodic small perturbations of operators results in transformation of double points into thin handles \cite{Krichever2,Krichever3,Taim4,Taim6}, therefore it is natural to develop perturbation theory near such pairs.  

By analogy with \cite{GS5}, using the scaling symmetry of DS2, we assume that
\begin{equation}\label{eq:zero_harm}
\int\limits_{0}^{L_x}\int\limits_{0}^{L_y}  v_0(x,y) dx dy =0.     
\end{equation}
This assumption essentially simplifies the calculations.

The Fourier harmonics of the perturbation are enumerated by pairs of integers:
\begin{equation}\label{eq:one_harm}
k_x = n_x \frac {2\pi}{L_x}, \ \  k_y = n_x \frac {2\pi}{L_y}, \ \ n_x,n_x\in\ZZ.  
\end{equation}
In our paper we assume that the periods $L_x$, $L_y$ are generic, therefore
\begin{enumerate} 
 \item $k_x^2+k_y^2\ne 4$ for all $n_x$, $n_y$;  
 \item All multiple points of the image of (\ref{eq:DS_lax4.4}) are double points.
\end{enumerate}
For non-generic periods one has to study perturbations of higher-order multiple points. This problem may be very interesting, but it requires a serious additional investigation, therefore we do not discuss it now.  

Equations (\ref{eq:DS_lax4.5}) are equivalent to the following pair of equations:
\begin{equation}\label{eq:DS_lax4.6}
  \left\{\begin{array}{l} \tau_2-\tau_1 = k_x+ i k_y, \\ \frac{1}{\tau_2}-\frac{1}{\tau_1} = k_x- i k_y,
         \end{array} \right.
\end{equation}
where $k_x$, $k_y$ are defined by (\ref{eq:one_harm}) with integer $n_x,n_y$. 

\begin{remark}Equation (\ref{eq:DS_lax4.6}) has the following interpretation: the matrix elements for the harmonic perturbation with a given $n_x,n_y$ are non-zero only if (\ref{eq:DS_lax4.6}) is fulfilled. Therefore in the leading order of perturbation theory only the wave functions of corresponding resonant pairs appear. 
\end{remark}

If $k_x^2+k_y^2<4$, this mode is unstable, otherwise it is stable. We have two types of resonant pairs $(\tau_1,\tau_2)$ respectively:
\begin{enumerate}
\item Resonant pairs corresponding to the unstable modes $k_x^2+k_y^2<4$:
\begin{align}\label{eq:res_unstable}
  \tau_1&= \frac{k_x+ i k_y}{2}\left[-1 \pm i\sqrt{\frac{4- k_x^2 -k_y^2}{k_x^2+k_y^2}}\right], \\
  \tau_2&= \frac{k_x+ i k_y}{2}\left[1 \pm i\sqrt{\frac{4- k_x^2 -k_y^2}{k_x^2+k_y^2}}\right], \ \ |\tau_1|=|\tau_2|=1
          \nonumber
\end{align}
In analogy with the NLS case \cite{GS1,GS5}, it is convenient to parametrize the unstable modes by angles:
\begin{equation}
  k_x = 2 \cos\phi \cos\theta,\ \  k_y = 2 \cos\phi \sin\theta. 
\end{equation}  
Then
\begin{equation}
\tau_1 = -e^{i(\theta\mp\phi)}, \ \  \tau_2  = e^{i(\theta\pm\phi)}.
\end{equation}  

\item Resonant pairs corresponding to the stable modes $k_x^2+k_y^2>4$ ;
\begin{equation}\label{eq:res_stable}
\tau_1= \frac{k_x+ i k_y}{2}\left[-1+ \sqrt{\frac{k_x^2+k_y^2-4}{k_x^2+k_y^2}}\right], \ \ \tau_2 = -\frac{1}{\bar\tau_1}.
\end{equation}
\end{enumerate}

\begin{remark} Due to the reality condition, the wave vectors $(k_x,k_y)$ and  $(-k_x,-k_y)$ appear simultaneously, and they correspond to the same unstable mode.  
\end{remark}
  
Following \cite{GS1} we introduce a finite-gap approximation by neglecting all stable modes. Consider all resonant pairs $(\tau_{2j-1},\tau_{2j})$ corresponding to the unstable modes, $j=1,\ldots,2N$, where $N$ is the number of unstable modes. We also assume that $\Im\tau_{2j-1}\tau_{2j}^{-1}>0$ for all $j$ and the points $\tau_1$,  $\tau_3$,  $\tau_5$, \ldots, $\tau_{2j-1}$ are ordered clockwise.

\begin{example}\label{ex:ex1}
  Let $L_x=2\pi/1.2$, $L_y=2\pi/1.4$. Then $k_x=1.2 n_x$, $k_y=1.4 n_y$ and we have 4 unstable modes
\begin{equation}
(n_x,n_y) = (1,0), \ \ (n_x,n_y) = (0,1), \ \ (n_x,n_y) = (1,1), \ \ (n_x,n_y) = (1,-1). 
\end{equation}
Therefore we have 8 pairs of resonant points:
\begin{itemize}
\item $(\tau_1,\tau_2)$ and  $(\tau_{9},\tau_{10})$ correspond to $(n_x,n_y) = (1,1)$;
\item $(\tau_3,\tau_4)$ and  $(\tau_{11},\tau_{12})$ correspond to $(n_x,n_y) = (1,0)$;
\item $(\tau_5,\tau_6)$ and  $(\tau_{13},\tau_{14})$ correspond to $(n_x,n_y) = (1,-1)$;  
\item $(\tau_7,\tau_8)$ and  $(\tau_{15},\tau_{16})$ correspond to $(n_x,n_y) = (0,1)$. 
\end{itemize}  
\begin{figure}[H]
  \centering{
    \includegraphics[width=0.3\textwidth]{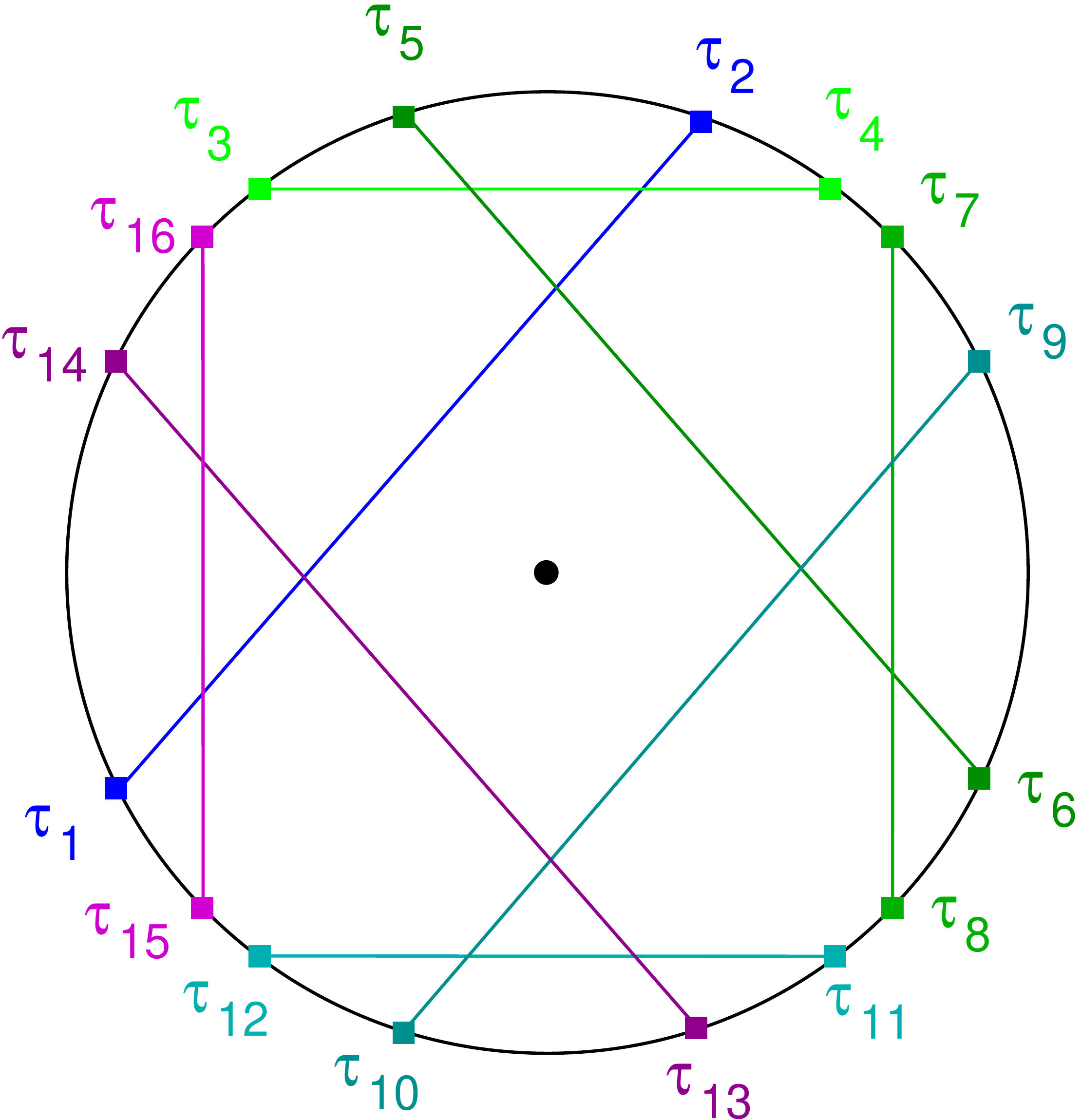}\hspace{5mm}
    \includegraphics[width=0.3\textwidth]{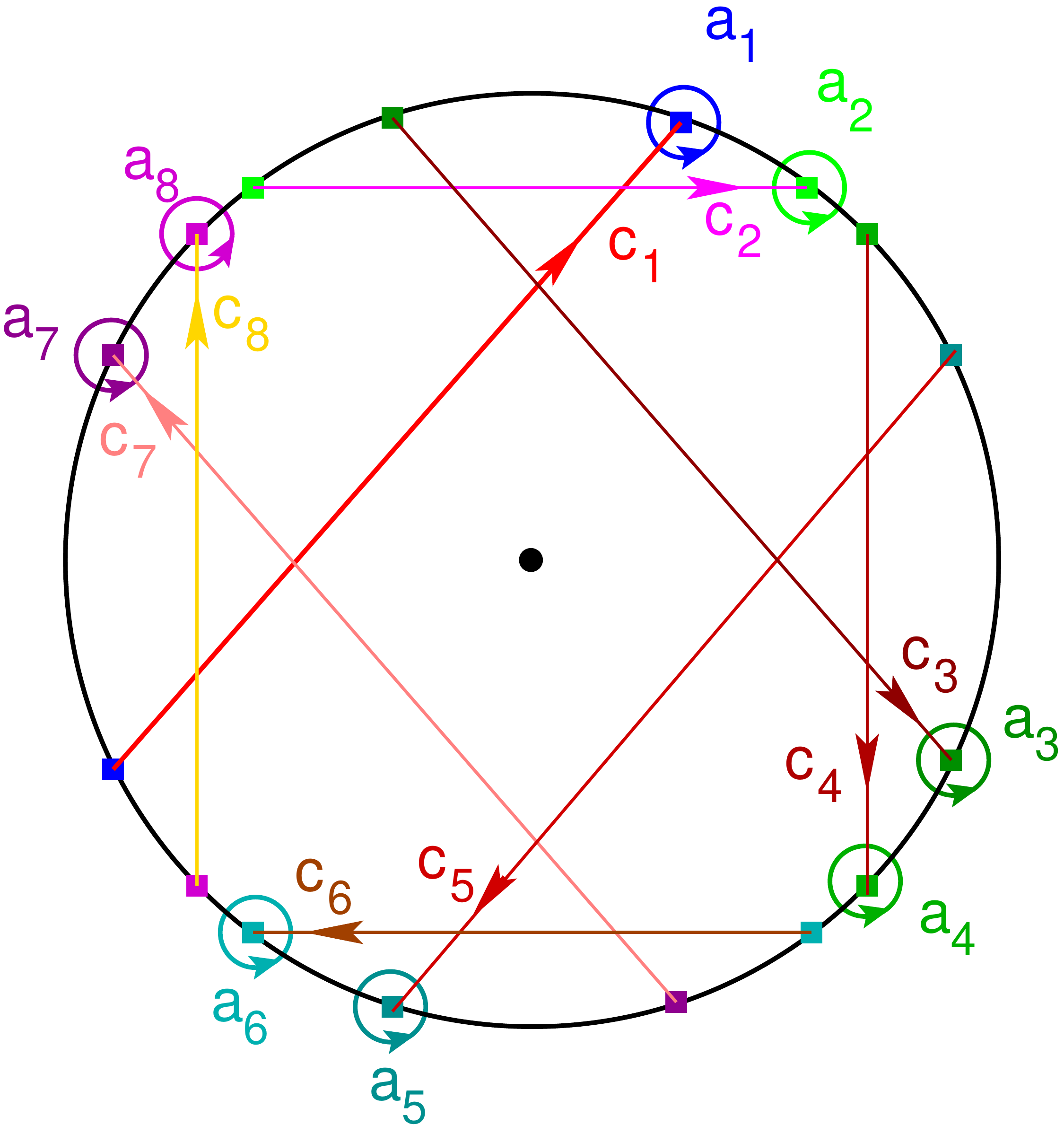}\hspace{5mm}
    \includegraphics[width=0.3\textwidth]{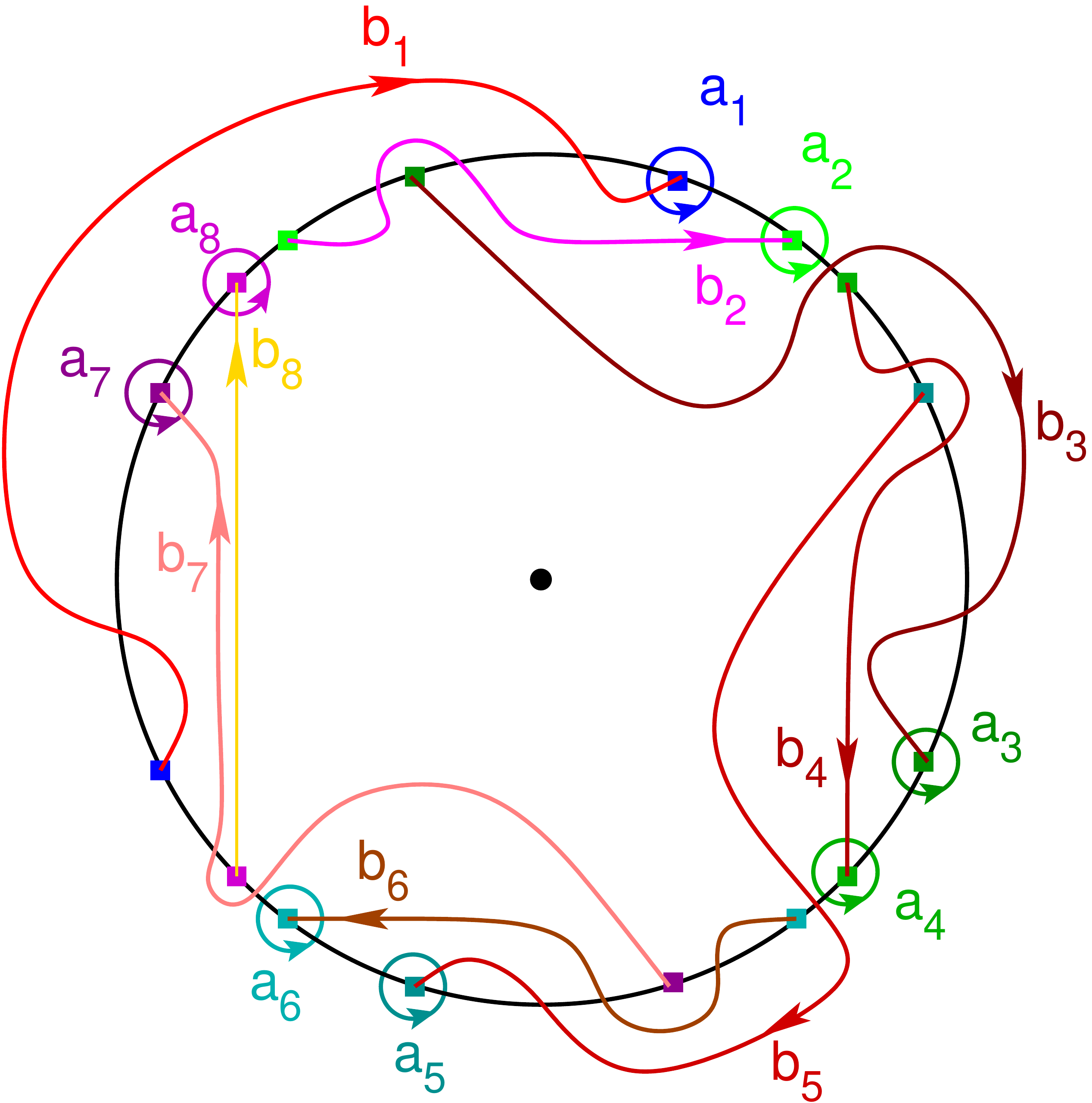}
      \caption{\small{\sl On the left: enumeration of the resonant points for Example~\ref{ex:ex1}. In the middle: the system of $a$ and $c$-cycles for Example~~\ref{ex:ex1}. On the right: the corresponding system of $a$ and $b$ cycles.}}\label{fig:enumeration}}
\end{figure}
 \end{example}

Let us introduce a system of basic cycles for the unperturbed spectral curve. Let $a_j$ be a small cycle about the point $\tau_{2j}$ oriented counterclockwise, or equivalently,  a small cycle about the point $\tau_{2j-1}$ oriented clockwise. Denote by $c_j$ the line, starting at the point $ \tau_{2j-1}$ and ending at the point $\tau_{2j}$. Of course
$$
a_j \cdot c_k = \delta_{jk},
$$
but $c_j\cdot c_k$ is not necessary 0. But if we define $b_j$ by 
\begin{equation} \label{eq:b-cycles}
b_j = c_j-\sum_{k>j} (c_j\cdot c_k) c_k,
\end{equation}
we obtain a canonical basis of cycles on the unperturbed curve:
$$
a_j \cdot a_k = b_j \cdot b_k=0, \ \   a_j \cdot b_k  =  \delta_{jk}.
$$

\begin{remark}
The choice of $b$-cycles corresponding to a given system of $a$-cycles is not unique, but any integer symplectic transformation
\begin{equation}\label{eq:symplectic}
b_j\rightarrow b_j + \sum d_{ji} a_k, \ \ d_{jk}\in\ZZ, \ \ d_{jk}=d_{kj},
\end{equation}  
maps a system of $b$-cycles corresponding to the given system of $a$-cycles, to another one. But if all $d_{jj}=0 \ \ (\!\!\!\!\mod 2)$, this transformations of cycles does not affect the $\theta-function$. Therefore different selections of the systems of $b$-cycles generate the same $\theta$-function.
\end{remark}
\begin{figure}[H]
  \centering{\includegraphics[width=0.8\textwidth]{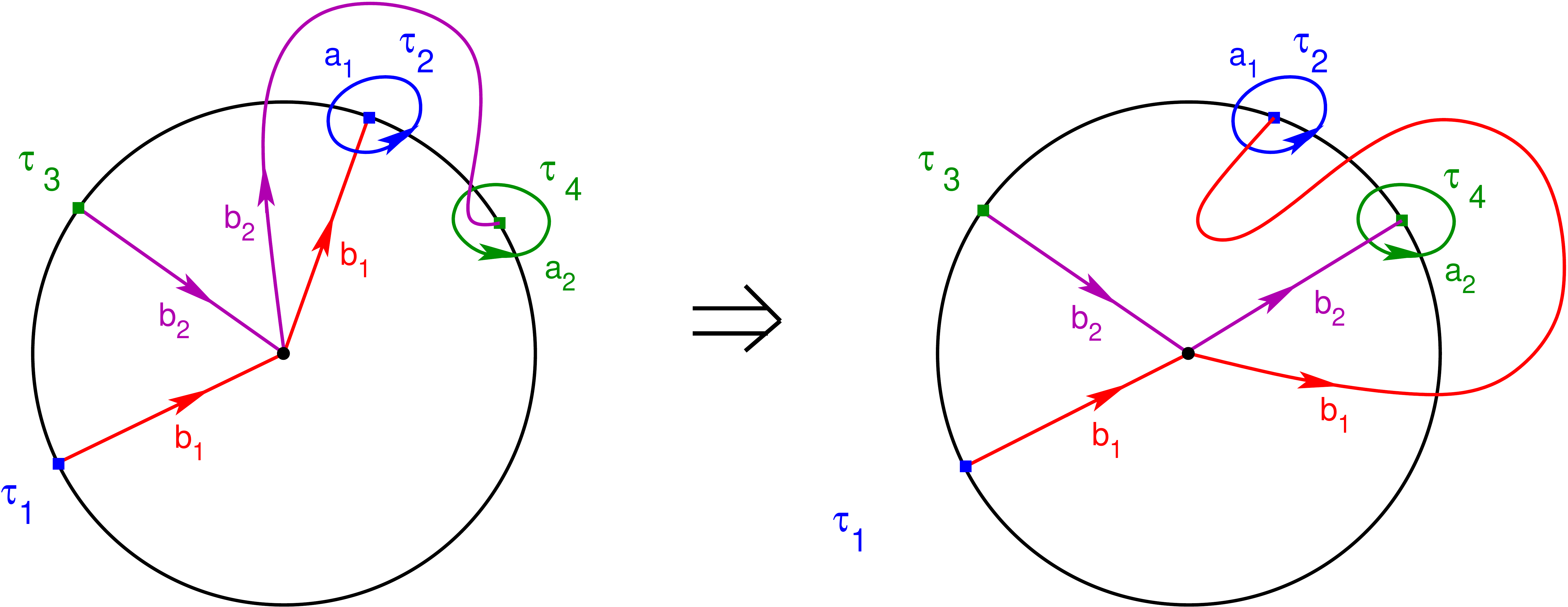}
      \caption{\small{\sl An elementary symplectic transformations of a system of $b$-cycles: $b_1\rightarrow b_1+a_2$, $b_2\rightarrow b_2+a_1$.}}\label{fig:enumeration}}
\end{figure}

In this paper we use the same approximation as in \cite{GS1}: the off-diagonal terms of the Riemann matrix, and the periods of the meromorphic differentials are calculated for the unperturbed curve.

The basic differentials on the unperturbed curve $\Gamma_0$ are:
\begin{align}
&\omega_j = \left[\frac{1}{\tau-\tau_{2j}}- \frac{1}{\tau-\tau_{2j-1}}\right]d\tau = d\log\left[\frac{\tau-\tau_{2j}}{\tau-\tau_{2j-1}}  \right], \ \ j=1,\ldots,g, \label{eq:unpert01}\\
  & dp = d\left(\frac{1}{2}\left[\tau +\frac{1}{\tau} \right]\right) =\frac{i q}{\tau}\,d\tau   , \ \ dq= -d\left(\frac{i}{2}\left[\tau -\frac{1}{\tau} \right]\right)=-\frac{i p}{\tau}\,d\tau, \label{eq:unpert02}\\
  & \Omega_0= \frac{d\tau}{\tau} =d\log(\tau), \ \ \Omega_{z} =-\frac{id\tau}{2\tau^2}= d\left(\frac{i}{2\tau}\right), \ \  \Omega_{\bar z} = \frac{i}{2} d\tau, \ \ \Omega_t =\frac{i}{2}d\left(\tau^2-\frac{1}{\tau^2} \right) 
\end{align}

\begin{lemma} For the unperturbed curve the periods of the differentials are the following:
  \begin{align}
    & \int\limits_{b_j} \Omega_0 = \log\left[\frac{\tau_{2j}}{\tau_{2j-1}} \right] = \log\left[\tau_{2j}\bar\tau_{2j-1}\right], \ \ \left(W_z\right)_j=\int\limits_{b_j} \Omega_z = \frac{i}{2}\left[\bar\tau_{2j}-\bar\tau_{2j-1} \right], \label{eq:abel01} \\
    &  \left(W_{\bar z}\right)_j=\int\limits_{b_j} \Omega_{\bar z} = \frac{i}{2}\left[ \tau_{2j}- \tau_{2j-1} \right],  \ \
      \left(W_t\right)_j =\int\limits_{b_j} \Omega_t = \Im(\tau_{2j-1}^2-\tau_{2j}^2), \label{eq:abel02} \\
    & A_j(\infty_2) - A_j(\infty_1) =  \log\left[\frac{\tau_{2j-1}}{\tau_{2j}} \right] =\log\left[\tau_{2j-1}\bar\tau_{2j}\right], \label{eq:abel03}\\
    & b_{jk} = \log\left[\frac{(\tau_{2j}-\tau_{2k}) (\tau_{2j-1}-\tau_{2k-1})}{(\tau_{2j}-\tau_{2k-1}) (\tau_{2j-1}-\tau_{2k}) } \right], \ \ k\ne j. \label{eq:abel04}\
  \end{align}  
  Moreover, for the unperturbed curve $\Gamma_0$ we have:
  \begin{equation}\label{eq:constants2}
   {\cal C}_0= {\cal C}_z= {\cal C}_{\bar z} = {\cal C}_t =0. 
  \end{equation}
\end{lemma}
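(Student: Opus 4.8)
The plan is to exploit the fact that the unperturbed curve $\Gamma_0$ is rational: it is the $\tau$-sphere $\CC P^1$, and every differential entering the statement is given \emph{explicitly} as the exterior derivative of a globally defined meromorphic (or logarithmic) function of $\tau$, namely $\Omega_0=d\log\tau$, $\Omega_z=d(\tfrac{i}{2\tau})$, $\Omega_{\bar z}=d(\tfrac{i\tau}{2})$, $\Omega_t=d(\tfrac{i}{2}(\tau^2-\tau^{-2}))$, and $\omega_j=d\log\tfrac{\tau-\tau_{2j}}{\tau-\tau_{2j-1}}$, as recorded in (\ref{eq:unpert01})--(\ref{eq:unpert02}). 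First I would confirm the normalization (\ref{eq:omegas3}): each $a_j$ encircles the finite point $\tau_{2j}$ (equivalently $\tau_{2j-1}$), where all of $\Omega_0,\Omega_z,\Omega_{\bar z},\Omega_t$, and, for $k\ne j$, $\omega_k$, are holomorphic and residue-free, so the $a$-periods vanish automatically; this is what makes these the correctly normalized differentials on $\Gamma_0$.

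Next I would evaluate each differential along the chord $c_j$ joining $\tau_{2j-1}$ to $\tau_{2j}$. Because every differential is exact, the integral is just the difference of its primitive at the two endpoints: $\int_{c_j}\Omega_{\bar z}=\tfrac{i}{2}(\tau_{2j}-\tau_{2j-1})$, $\int_{c_j}\Omega_z=\tfrac{i}{2}(\tau_{2j}^{-1}-\tau_{2j-1}^{-1})$, $\int_{c_j}\Omega_t=\tfrac{i}{2}[(\tau_{2j}^2-\tau_{2j}^{-2})-(\tau_{2j-1}^2-\tau_{2j-1}^{-2})]$, $\int_{c_j}\Omega_0=\log(\tau_{2j}/\tau_{2j-1})$, and $\int_{c_j}\omega_k$ is the double difference of logarithms producing the right-hand side of (\ref{eq:abel04}); likewise the Abel integral between the marked points $\tau=0,\infty$ yields (\ref{eq:abel03}). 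I would then feed in the essential geometric input from (\ref{eq:res_unstable}), that every unstable resonant point lies on the unit circle, so $\tau^{-1}=\bar\tau$; this converts the reciprocals into complex conjugates and turns the displayed $c_j$-integrals into the stated (\ref{eq:abel01})--(\ref{eq:abel02}).

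It remains to pass from the chords $c_j$ to the canonical cycles $b_j$ of (\ref{eq:b-cycles}) and to pin down the constants (\ref{eq:constants2}). For the constants I would fix the primitives by the $\infty_1$-normalization (\ref{eq:omegas4}) and expand them at $\infty_2$ against (\ref{eq:omegas5}); using $\lambda_1=\tfrac{i}{2\tau}$, $\lambda_2=\tfrac{i\tau}{2}$ one finds $F_z=\tfrac{i}{2\tau}=\lambda_1$ and $F_{\bar z}=\tfrac{i\tau}{2}=\lambda_2$ identically, while $F_t=\tfrac{i}{2}(\tau^2-\tau^{-2})$ equals $2i\lambda_1^2$ at $\infty_1$ and $-2i\lambda_2^2$ at $\infty_2$, so no residual additive constant survives and ${\cal C}_z={\cal C}_{\bar z}={\cal C}_t=0$ follows at once. \textbf{The two points that require genuine care, and which I expect to be the main obstacle, are:} first, the logarithmic primitive $F_0=\log\tau$, whose expansions at the two marked points must be matched along the fixed path ${\cal P}_0$ with careful attention to the branch of the logarithm and to the relation $\lambda_1\lambda_2=-\tfrac14$ between the local parameters; and second, the replacement of the mutually intersecting chords $c_j$ by the canonical basis $b_j$, where one must verify that the intersection-correction terms $-\sum_{k>j}(c_j\cdot c_k)c_k$ alter the periods only by quantities that, in the approximation adopted here and by the Remark following (\ref{eq:symplectic}) on even-diagonal symplectic changes, leave the $\theta$-function invariant. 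The three rational differentials themselves present no difficulty.
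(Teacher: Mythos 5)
Your computation coincides with the paper's implicit one: the lemma is stated there without proof, and the intended verification is precisely your endpoint evaluation of the explicit exact differentials along the chords $c_j$, together with $|\tau_{2j-1}|=|\tau_{2j}|=1$ to trade $\tau^{-1}$ for $\bar\tau$; all of \eqref{eq:abel01}--\eqref{eq:abel04} and ${\cal C}_z={\cal C}_{\bar z}={\cal C}_t=0$ come out correctly in your argument. However, the two points you flag as ``requiring genuine care'' deserve a sharper verdict than the one you promise. First, ${\cal C}_0=0$ cannot be rescued by any branch bookkeeping along ${\cal P}_0$: with $\lambda_1=i/(2\tau)$, $\lambda_2=i\tau/2$ the normalized primitive is $F_0=\log\tau-\log(i/2)$, so near $\infty_2$ one finds ${\cal C}_0=-2\log(i/2)=2\log 2-i\pi$ modulo $2\pi i$, and the real part $2\log 2$ is branch-independent. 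This is a normalization quirk of the paper rather than a flaw of your method: ${\cal C}_0$ enters only the formula for $f(\gamma)$, where it is effectively used multiplicatively (the unperturbed value $e^{{\cal C}_0}=-4$ is consistent with the unperturbed $f(\tau)=i\tau/2$, i.e. ${\cal C}=-1/4$ and $|X_{-1}|^2=-\eta\,{\cal C}^{-1}=4$, matching $X_{-1}=-2/u(0)=-2$), it drops out of the solution formula for $u$, and the paper's concluding Theorem accordingly asserts only ${\cal C}_z={\cal C}_{\bar z}={\cal C}_t=0$. So you should record the actual value of ${\cal C}_0$ (or restrict the claim to the other three constants) instead of deferring to a branch-matching argument that cannot close.

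Second, your justification of the passage from $c_j$ to $b_j$ is off target. The remark on symplectic changes covers transformations $b_j\to b_j+\sum_k d_{jk}a_k$; but $b_j-c_j=-\sum_{k>j}(c_j\cdot c_k)\,c_k$ is an integer combination of the \emph{other $c$-cycles} (equivalently, of the other $b$-cycles), not of $a$-cycles, so that remark does not apply. Homologically, $\int_{b_j}\Omega_z=\int_{c_j}\Omega_z-\sum_{k>j}(c_j\cdot c_k)\int_{c_k}\Omega_z$, and the correction terms are $O(1)$ quantities of the same size as the periods themselves; they are not negligible in the adopted approximation, and the chords genuinely intersect here, since the resonant points of each pair are nearly antipodal on the unit circle --- this is exactly why the paper introduces \eqref{eq:b-cycles} and draws crossing $c$-cycles in its figures. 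The right-hand sides of \eqref{eq:abel01}--\eqref{eq:abel04} are exactly the $c_j$-periods. To prove the lemma as literally stated you would therefore have to either (i) track the intersection corrections and show that they cancel --- the natural mechanism being the antipodal pairing $c_{j+N}$ under $\tau\to-\tau$, though note the signs differ among the differentials ($\int\Omega_z$ and $\int\Omega_{\bar z}$ flip sign under the pairing, while $\int\Omega_0$ and $\int\Omega_t$ do not) --- or (ii) read the lemma, as the paper implicitly does, as giving the periods over the chords $c_j$, with the integer intersection corrections absorbed into the choice of homology marking. Your evaluations of the rational differentials themselves, including the $a$-period normalization check and the cross-ratio formula \eqref{eq:abel04}, are all correct.
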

\begin{remark}
The double ratio in (\ref{eq:abel04}) is always real, but its sign depends on the relative positions of the points $\tau_{2j-1}$, $\tau_{2j}$, $\tau_{2k-1}$, $\tau_{2k}$. The possible cases are
\begin{figure}[H]
  \centering{
    \includegraphics[width=0.7\textwidth]{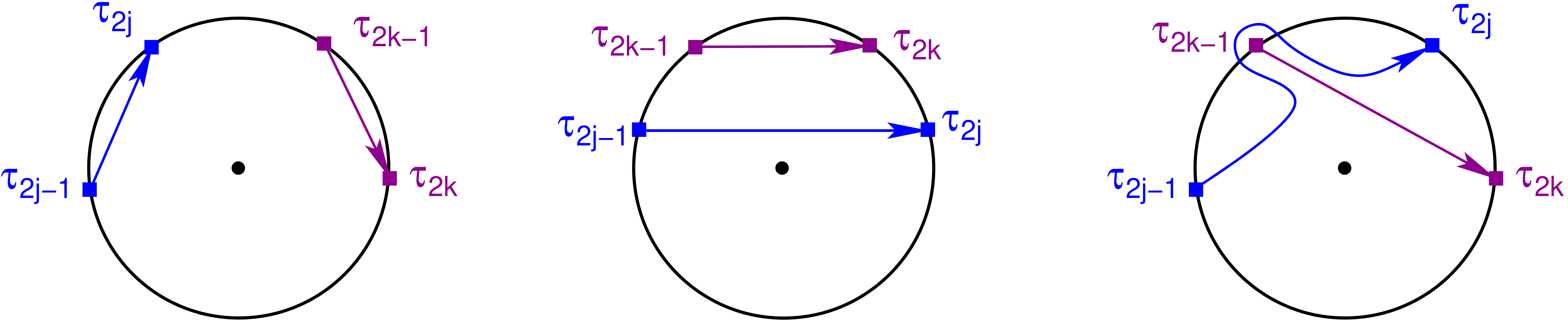}
    \caption{\small{\sl Possible configurations in  Formula (\ref{eq:abel04}). On the left and in the middle the double ratio is positive and the period $b_{jk}$ is real. On the right the double ratio is negative, and $b_{jk}$ is real number plus $\pi i$.}}\label{fig:periods1}}
\end{figure}
\end{remark}
\subsection{Perturbed spectral curve in the leading order}

Let us restrict ${\cal L}$ to the space of Bloch functions ${\cal F}(p,q)$  with fixed Bloch multipliers:
\begin{equation}
\label{eq:bloch1}
\begin{split}
\psi(x+L_x,y) = \varkappa_x \psi(x,y), \ \ \psi(x,y+L_y) = \varkappa_y \psi(x,y), \\ 
\varkappa_x = \exp{[iL_x p]}, \ \ \varkappa_y = \exp{[iL_yq]}, \ \ p,q\in\CC.
\end{split}
\end{equation}
Operator ${\cal L}_0$ has the following basis of eigenfunctions in ${\cal F}(p,q)$:
\begin{equation}
\label{eq:bloch2}
\psi^{(\pm)}_{m,n}=\begin{bmatrix}1 \\ -i p_m \pm \sqrt{1-p_m^2}  \end{bmatrix} \exp\big(i [p_m x  + q_n y] \big), 
\end{equation}
where
\begin{equation}
\label{eq:bloch21}
p_m=p+ \frac{2\pi}{L_x}m, \ \  q_n = q+ \frac{2\pi}{L_y}n,  \ \ m,n\in\ZZ,
\end{equation}
solving the eigenvalue equation
\begin{equation}
\label{eq:bloch22}
{\cal L}_0 \psi^{(\pm)}_{m,n} =\left(-q_n \pm  \sqrt{1-p_m^2}\right)  \psi^{(\pm)}_{m,n}.
\end{equation}
In (\ref{eq:bloch1})-(\ref{eq:bloch22}) we do not assume that $p^2+q^2=1$.

Consider a monochromatic unstable perturbation
\begin{equation}\label{eq:pert1}  
v(x,y) = c_{j} e^{i(k_x x + k_y y)} +  c_{-j} e^{-i(k_x x + k_y y)}, \ \ k_x,k_y \in \RR, \ \ k_x^2+ k_y^2 < 4,
\end{equation}
and a corresponding resonant pair $(\tau_{2j-1},\tau_{2j})$:
\begin{equation}\label{eq:res1}  
\begin{split}
  \tau_{2j-1}=p_{2j-1}+iq_{2j-1}, \ \ \tau_{2j} = p_{2j} + i q_{2j}, \ \ p_{2j}-p_{2j-1} = k_x, \ \ q_{2j}-q_{2j-1} = k_y, \\ |\tau_{2j-1}|=|\tau_{2j}|=1.
\end{split}  
\end{equation}
(Let us recall that for each unstable mode we have two resonant pairs associated with it:  $(\tau_{2j-1},\tau_{2j})$ and $(\tau_{2j+2N-1},\tau_{2j+2N})=(-\tau_{2j},-\tau_{2j})$.)
The restriction of ${\cal L}_0$ to ${\cal F}(p_{2j-1},q_{2j-1})={\cal F}(p_{2j},q_{2j})$ has a two-dimensional zero subspace, generated by the functions $f^{(+)}_{2j-1}$, $f^{(+)}_{2j}$,  where:
\begin{align}\label{eq:bloch3}
  f^{(\pm)}_{2j-1}&=\begin{bmatrix}1 \\ -i p_{2j-1} \pm q_{2j-1}  \end{bmatrix} \exp\left(i\left[ p_{2j-1} x + q_{2j-1} y\right]  \right), \\
  f^{(\pm)}_{2j}&=\begin{bmatrix}1 \\ -i p_{2j} \pm q_{2j}  \end{bmatrix} \exp\left(i\left[ p_{2j} x + q_{2j} y\right]  \right),\nonumber
\end{align}
Following \cite{Krichever2,Krichever3}, calculate the perturbation of the Riemann surface near this resonant pair. Denote by $\hat{\cal F}(\delta p,\delta q)$ the Bloch space with the multipliers:
\begin{equation}\label{eq:bloch4}
\tilde\varkappa_x = \exp{[iL_x (p_{2j-1}+\delta p) ]}, \ \ \tilde\varkappa_y = \exp{[iL_y (q_{2j-1} +\delta q)]}, \ \ |\delta p|\ll1, \ \  |\delta q|\ll1.
\end{equation}
In the leading order approximation in $\delta p$,  $\delta q$ one has:
\begin{equation}\label{eq:pert2}
\begin{split}  
  \tilde f_k^{(\pm)}=\begin{bmatrix} 1 \\ -i p_{2j-2+k} \pm q_{k} - i\delta p \mp  \frac{p_{k}}{q_{k}} \delta p  \end{bmatrix} \times \\ \times \exp\left(i\left[\left(p_{k}+\delta p\right)  x    + \left(q_{k}+ \delta q \right)  y\right]  \right), \ \ k=2j-1,2j.
 \end{split} 
\end{equation}
and
\begin{equation}\label{eq:pert4}
{\cal L}_0  \tilde f_{k}^{(+)} = \left(-\delta q - \frac{p_{k}}{q_{k}} \delta p \right) \tilde f_{k}^{(+)}.
\end{equation}
In the leading order approximations for the matrix representation of the block of ${\cal L}_1$ corresponding to this subspace we have:
\begin{equation}\label{eq:pert4.1}
  \begin{bmatrix} 0 & <f^{(+)}_{2j-1}|{\cal L}_1|f^{(+)}_{2j}> \\  <f^{(+)}_{2j}|{\cal L}_1|f^{(+)}_{2j-1}> & 0
  \end{bmatrix}   
\end{equation}
Let us calculate these matrix elements in  ${\cal L}_1$. The dual basic vectors $f_{k}^{(+)*}$, $k=2j-1,2j$  are defined by
\begin{align}\label{eq:pert5.0}
  f^{(+)*}_{k}(x+L_x,y) &= \exp{[-iL_x p_{k}]} f^{(+)*}_{k}(x,y), \\
  f^{(+)*}_{k}(x,y+L_y) &= \exp{[-iL_y q_{k}]} f^{(+)*}_{k}(x,y), \nonumber
\end{align}
and
\begin{equation}\label{eq:pert5.0}
 <f^{(+)*}_{k},f_{l}^{(+)}> = \delta_{k,l}, \ \  <f^{(+)*}_{k},f_{l}^{(-)}> =0, \ \ l=2j-1,2j;
\end{equation}
therefore
\begin{equation}\label{eq:pert5}
  f^{(+)*}_{k}=\frac{1}{2 L_x L_y q_{k}}\begin{bmatrix} i \bar\tau_{k}, 1  \end{bmatrix} \exp\left(-i\left[ p_{k} x + q_{k} y\right]  \right), 
\end{equation}
and
\begin{equation}\label{eq:pert6}
 <f^{(+)}_{2j-1}|{\cal L}_1|f^{(+)}_{2j}> =\frac{1}{2 q_{2j-1}}\begin{bmatrix} i \bar\tau_{2j-1}, 1  \end{bmatrix}
  \begin{bmatrix} 0 & c_{-j} \\ \bar c_j & 0  \end{bmatrix} \begin{bmatrix}1 \\ -i \tau_{2j}  \end{bmatrix} = -\alpha_j=\frac{\overline{c}_j+\bar\tau_{2j-1}\tau_{2j} c_{-j}}  {2q_{2j-1}},
\end{equation}
\begin{equation}\label{eq:pert7}
 <f^{(+)}_{2j}|{\cal L}_1|f^{(+)}_{2j-1}> =\frac{1}{2 q_{2j}}\begin{bmatrix} i \bar\tau_{2j}, 1  \end{bmatrix}
  \begin{bmatrix} 0 & c_{j} \\ \bar c_{-j} & 0  \end{bmatrix} \begin{bmatrix}1 \\ -i \tau_{2j-1}  \end{bmatrix}=\beta_j=\frac{\overline{c}_{-j}+\bar\tau_{2j}\tau_{2j-1} c_{j}}  {2q_{2j}}.
\end{equation}

Let  $(\tau_{2j-1},\tau_{2j})$ and $(\tau_{2j-1+2N},\tau_{2j+2N})=(-\tau_{2j-1},-\tau_{2j}) $ be two pairs of resonant points corresponding to the same monochromatic perturbation. Then we have the following symmetry:
\begin{equation}\label{eq:pert7.1}
\alpha_{j+N}\beta_{j+N} = \overline{\alpha_{j}\beta{j}}.
\end{equation}
Near the resonant pair $(\tau_{2j-1},\tau_{2j})$ the spectral curve in the leading order is defined by
\begin{equation}\label{eq:pert8}
\det  \begin{bmatrix}  -\frac{p_{2j-1}}{q_{2j-1}} \delta p -\delta q & -\varepsilon\alpha_j  \\ \varepsilon\beta_j    &  -\frac{p_{2j}}{q_{2j}} \delta p -\delta q \end{bmatrix} =0. 
\end{equation}
Using equation (\ref{eq:pert8}) we locally define $\delta q$ as a two-valued function of $\delta p$. Let us remark that $\delta p$ is a well-defined local parameter near the points $\tau_{2j-1}$ and $\tau_{2j}$, and it defines a local isomorphism between the neighborhoods of these points, and the map
\begin{equation}\label{eq:pert8.1}
(\delta p,\delta q)\rightarrow \delta p,
\end{equation}
is locally a two-sheeted covering. To calculate the branch point of this covering, denote:
\begin{equation}\label{eq:pert10}
\delta q= \delta\tilde q -\frac{1}{2}\left[ \frac{p_{2j-1}}{q_{2j-1}} + \frac{p_{2j}}{q_{2j}} \right]\delta p. 
\end{equation}
Equation (\ref{eq:pert8}) becomes:
\begin{equation}\label{eq:pert11}
\det  \begin{bmatrix}  - \frac{q_{2j}p_{2j-1}-q_{2j-1}p_{2j}}{2q_{2j-1}q_{2j}} \delta p -\delta\tilde q & -\varepsilon\alpha_j  \\ \varepsilon\beta_j    &  \frac{q_{2j}p_{2j-1}-q_{2j-1}p_{2j}}{2q_{2j-1}q_{2j}}   -\delta\tilde q \end{bmatrix} =0. 
\end{equation}
The branch points correspond to the double roots of (\ref{eq:pert11}) with respect to $\tilde\delta q$, therefore they correspond to the following values of $\delta p$:
\begin{equation}\label{eq:pert12}
\delta p= \pm \frac{2q_{2j-1}q_{2j}}{q_{2j}p_{2j-1}-q_{2j-1}p_{2j}}\varepsilon\sqrt{\alpha_j \beta_j}, \ \ \frac{p_{2j-1}}{q_{2j-1}} \delta p +\delta q = \pm\varepsilon\sqrt{\alpha_j \beta_j},
\end{equation}
Let us remark that
$$
q_{2j}p_{2j-1}-q_{2j-1}p_{2j}=\Im\left( \frac{\tau_{2j}}{\tau_{2j-1}} \right) =\Im\left(\tau_{2j} \bar\tau_{2j-1} \right).
$$

We assume that we fix one of the values of $\sqrt{\alpha_j \beta_j}$, and we use this value in all formulas below. For example, for generic data we may assume that $\Re\sqrt{\alpha_j \beta_j}>0$. Taking into account  (\ref{eq:unpert02}) we obtain the following formula for the branch points of this map in the leading order near the points $\tau_{2j-1}$ and $\tau_{2}$ respectively:

 \begin{align}\label{eq:pert12.2}
E_{4j-4+k}&= \tau_{2j-1}+(-1)^{k-1}\frac{2\tau_{2j-1}q_{2j}}{i\Im(\tau_{2j}\bar\tau_{2j-1})}\varepsilon\sqrt{\alpha_j \beta_j},\\
E_{4j-2+k}&= \tau_{2j-1}+(-1)^{k-1}\frac{2\tau_{2j}q_{2j-1}}{i\Im(\tau_{2j}\bar\tau_{2j-1})}\varepsilon\sqrt{\alpha_j \beta_j},\nonumber,
\end{align}
We assume here that at $E_{4j-3}$ and $E_{4j-1}$
\begin{align}\label{eq:pert12.2}
  \delta p = \left\{\begin{array}{ll} \frac{2q_{2j-1}q_{2j}}{\Im(\tau_{2j}\bar\tau_{2j-1})}\varepsilon\sqrt{\alpha_j \beta_j} & \mbox{ at } E_{4j-3} \sim  E_{4j-1}, \\
-\frac{2q_{2j-1}q_{2j}}{\Im(\tau_{2j}\bar\tau_{2j-1})}\varepsilon\sqrt{\alpha_j \beta_j} & \mbox{ at } E_{4j-2} \sim  E_{4j}.
                   \end{array}\right.   
\end{align}

The spectral curve for the perturbed operator is defined as follows. We cut the $\tau$-plane along the intervals $(E_{2j-1},E_{2j})$. For each resonant pair $(\tau_{2j-1},\tau_{2j})$ we glue the borders of the cuts $(E_{4j-3},E_{4j-2})$ and $(E_{4j-1},E_{4j})$. The point $E_{4j-3}$ is glued to $E_{4j-1}$, and the point  $E_{4j-2}$ is glued to $E_{4j}$. Moreover, if we glue a pair of points, the corresponding values of the Bloch multipliers $\varkappa_x$ are equal. The cycle $a_j$ is the oval surrounding the cuts $(E_{4j-1},E_{4j})$ and oriented counterclockwise, the cycle $c_j$ is the union of the oriented intervals $[E_{4j-3},0]$ and  $[0,E_{4j-1}]$, the cycles $b_j$ are defined by (\ref{eq:b-cycles}).
\begin{figure}[H]
  \centering{\includegraphics[width=0.8\textwidth]{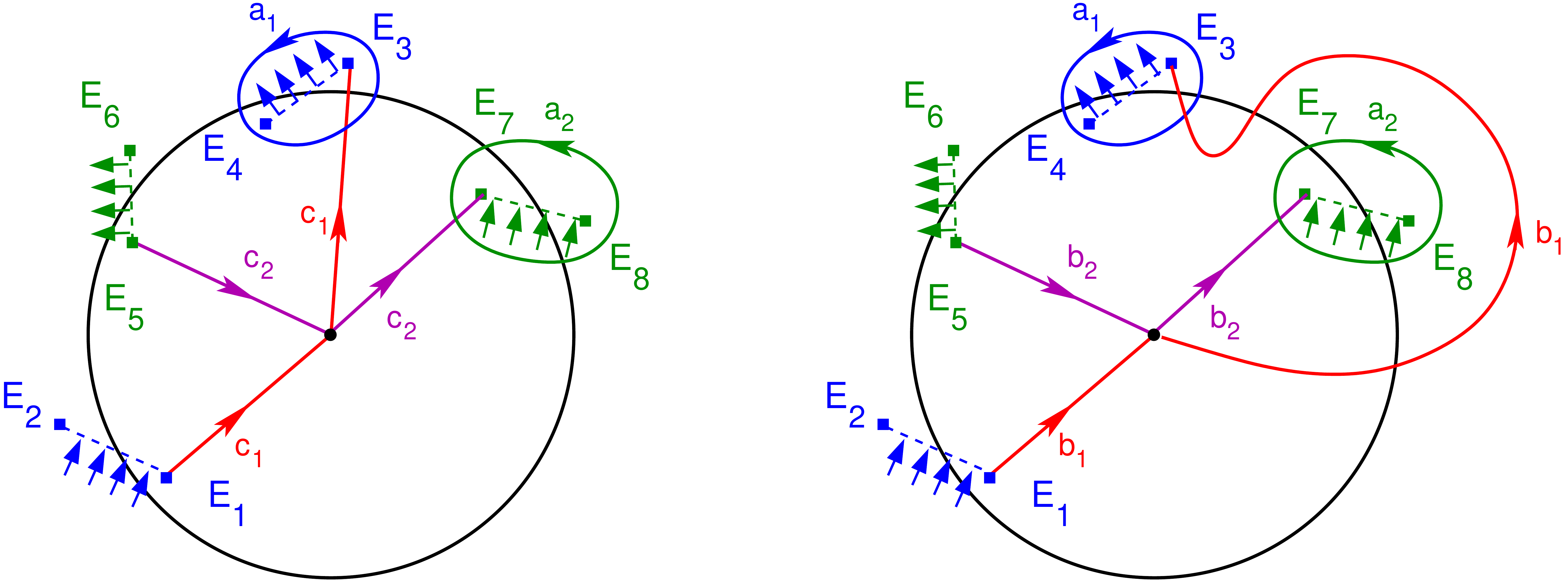}
      \caption{\small{\sl The perturbed curve. The borders of the cuts $(E_1,E_2)$ and $(E_5,E_6)$ are glued to the borders of the  cuts $(E_3,E_4)$ and $(E_7,E_8)$ respectively. On the left: the $a$-cycles and $c$-cycles. The system of $b$-cycles is shown on the right.}}\label{fig:perturbed10}}
\end{figure}

To calculate the basic differential $\omega_j$ in the leading order it is sufficient to know the positions of the points
$E_k$, $k=4j-3,\ldots,4j$, and the other branch points appear in higher-order corrections. On the corresponding elliptic curve we can use the following approximation:
\begin{equation}\label{eq:pert18}
\begin{split}  
  \omega_j = \frac{d\tau}{\sqrt{(\tau-E_{4j-1})(\tau-E_{4j})}} - \frac{d\tau}{\sqrt{(\tau-E_{4j-3})(\tau-E_{4j-2})}}=\\
  =d\log\left[\frac{\tau-\tau_{2j} + \sqrt{(\tau-E_{4j-1})(\tau-E_{4j})}}{\tau-\tau_{2j-1}+ \sqrt{(\tau-E_{4j-3})(\tau-E_{4j-2})}}\right].
 \end{split} 
\end{equation}
In (\ref{eq:pert18}) we assume that if $\tau-\tau_{2j-1}$ is of order 1, then $\sqrt{(\tau-E_{4j-3})(\tau-E_{4j-2})}\sim 
\tau-\tau_{2j-1}$. Analogously, if $\tau-\tau_{2j}$ is of order 1, then then $\sqrt{(\tau-E_{4j-2})(\tau-E_{4j-1})}\sim 
\tau-\tau_{2j}$, therefore outside the neigbourhood of this resonant pair, formula (\ref{eq:pert18}) coincides with
(\ref{eq:unpert01}).
It is clear that in the leading order
\begin{equation}\label{eq:pert18.1}  
  \omega_j =\left\{\begin{array}{ll}d\log\left[\tau-\tau_{2j} + \sqrt{(\tau-E_{4j-1})(\tau-E_{4j})}\right], & \tau\sim\tau_{2j},\\
                 -d\log\left[\tau-\tau_{2j-1} + \sqrt{(\tau-E_{4j-2})(\tau-E_{4j-2})}\right], & \tau\sim\tau_{2j-1};
                   \end{array}
                   \right.
\end{equation}
therefore, in the leading order approximation, the basic holomorphic differential $\omega_j$ at the handle connecting $\tau_{2j-1}$ with $\tau_{2j}$ can be written as
\begin{equation}\label{eq:pert9}
\omega_j = d\log \left(\frac{p_{2j-1}}{q_{2j-1}} \delta p + \delta q\right) = - d\log \left(\frac{p_{2j}}{q_{2j}} \delta p + \delta q\right).
\end{equation}

The divisor points are defined by the condition: the first component of the Bloch eigenfunction for the perturbed operator is equal to zero at $z=0$, or, equivalently
\begin{equation}\label{eq:pert13}
\begin{bmatrix}  - \frac{p_{2j-1}}{q_{2j-1}} \delta p -\delta q & -\varepsilon\alpha_j  \\ \varepsilon\beta_j    &  -\frac{p_{2j}}{q_{2j}}   -\delta q \end{bmatrix} \begin{bmatrix} 1 \\ -1 \end{bmatrix} = \begin{bmatrix} 0 \\ 0 \end{bmatrix};
\end{equation}
therefore for the divisor point $\gamma_j$ we have
\begin{equation}\label{eq:pert14}
\frac{p_{2j-1}}{q_{2j-1}} \delta p +\delta q = \varepsilon\alpha_j.
\end{equation}
Combining (\ref{eq:pert12}) with (\ref{eq:pert14}) we obtain: 
\begin{lemma}
Let $E$ denote one of the branch points obtained by perturbing the resonant pair $(\tau_{2j-1},\tau_{2j})$, and let us choose it as the starting point of the Abel transform.  Then for the Abel transform of the divisor point $\gamma_j$ is given in the leading order by the following formula: 
\begin{equation}\label{eq:pert15}
\left[\vec A_{E_{4j-3}}(\gamma_j)\right]_k = \left\{ \begin{array}{ll} 0 & k\ne j \\ \log\left[\frac{\alpha_j}{\sqrt{\alpha_j\beta_j}} \right] & k=j. \end{array} \right.
\end{equation}
\end{lemma}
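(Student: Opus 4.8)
The plan is to evaluate the Abel transform componentwise as
$\left[\vec A_{E_{4j-3}}(\gamma_j)\right]_k = \int_{E_{4j-3}}^{\gamma_j}\omega_k$,
integrating the normalized holomorphic differential $\omega_k$ of (\ref{eq:omegas1}) along a path on the perturbed curve, and to exploit the fact that in the leading order the $g$ handles decouple, so that each $\omega_k$ is effectively supported on the $k$-th handle. The whole computation then splits into an off-diagonal part that vanishes and a diagonal part that telescopes.

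First I would treat the off-diagonal entries $k\ne j$. Both the base point $E_{4j-3}$ and the divisor point $\gamma_j$ lie in the $O(\varepsilon)$-neighbourhood of the resonant pair $(\tau_{2j-1},\tau_{2j})$: indeed $E_{4j-3}$ is one of the branch points (\ref{eq:pert12.2}) produced by perturbing this pair, and (\ref{eq:pert14}) places $\gamma_j$ at $\frac{p_{2j-1}}{q_{2j-1}}\delta p+\delta q=\varepsilon\alpha_j$, i.e. at a local parameter of order $\varepsilon$. On this neighbourhood the differential $\omega_k$ with $k\ne j$ coincides to leading order with its unperturbed expression (\ref{eq:unpert01}), which is holomorphic and of order one there. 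Hence $\int_{E_{4j-3}}^{\gamma_j}\omega_k=O(\varepsilon)$, and the components with $k\ne j$ vanish in the leading order.

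For the diagonal entry $k=j$ I would use the leading-order expression (\ref{eq:pert9}) for the basic differential along the $j$-th handle, $\omega_j=d\log w$ with $w=\frac{p_{2j-1}}{q_{2j-1}}\delta p+\delta q$, so that the integral telescopes to $\left[\vec A_{E_{4j-3}}(\gamma_j)\right]_j=\log\left(w(\gamma_j)/w(E_{4j-3})\right)$. The two endpoint values of $w$ are read directly off the curve: (\ref{eq:pert14}) gives $w(\gamma_j)=\varepsilon\alpha_j$, while the branch-point formula (\ref{eq:pert12}) together with the sign assignment fixed after (\ref{eq:pert12.2}) gives $w(E_{4j-3})=\varepsilon\sqrt{\alpha_j\beta_j}$. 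The factors of $\varepsilon$ cancel, leaving $\left[\vec A_{E_{4j-3}}(\gamma_j)\right]_j=\log\left(\alpha_j/\sqrt{\alpha_j\beta_j}\right)$, exactly as claimed in (\ref{eq:pert15}).

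The delicate points are bookkeeping rather than analytic. The hard part will be to verify that the branch $E_{4j-3}$ really carries the value $w=+\varepsilon\sqrt{\alpha_j\beta_j}$ and not $-\varepsilon\sqrt{\alpha_j\beta_j}$ under the convention fixed above, since the opposite choice would shift the answer by $\pi i$; this reduces to inserting $\delta\tilde q=0$ into (\ref{eq:pert10}) and checking that $w=A\,\delta p$ with $A=\tfrac12\bigl(\tfrac{p_{2j-1}}{q_{2j-1}}-\tfrac{p_{2j}}{q_{2j}}\bigr)$ indeed reproduces the $+$ sign of (\ref{eq:pert12}). One must also confirm that the chosen integration path stays on the $j$-th handle, so that the decoupling argument of the first step applies and no stray $a$- or $b$-period of $\omega_j$ is picked up; both checks are controlled by the explicit local sheet structure of the two-sheeted covering (\ref{eq:pert8.1}) near $\tau_{2j-1}$.
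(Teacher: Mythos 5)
Your proof is correct and takes essentially the same route as the paper, whose own derivation is precisely the one-line combination of (\ref{eq:pert12}) and (\ref{eq:pert14}): integrating $\omega_j=d\log w$ from (\ref{eq:pert9}) with $w=\tfrac{p_{2j-1}}{q_{2j-1}}\delta p+\delta q$ between the endpoint values $w(E_{4j-3})=\varepsilon\sqrt{\alpha_j\beta_j}$ and $w(\gamma_j)=\varepsilon\alpha_j$, the factors of $\varepsilon$ cancelling. Your explicit $O(\varepsilon)$ estimate for the off-diagonal components and the sign check at $E_{4j-3}$ via $\delta\tilde q=0$ in (\ref{eq:pert10}) are exactly the bookkeeping the paper leaves implicit, and both are verified correctly.
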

\begin{remark}  
It is easy to check that, for $j\le N$, the point $\sigma\gamma_{j+N}$ can be defined by:
\begin{equation}\label{eq:pert16}
\begin{bmatrix}  - \frac{p_{2j-1}}{q_{2j-1}} \delta p -\delta q & -\varepsilon\alpha_j  \\ \varepsilon\beta_j    &  -\frac{p_{2j}}{q_{2j}}   -\delta q \end{bmatrix} \begin{bmatrix} \tau_{2j} \\ -\tau_{2j-1} \end{bmatrix} = \begin{bmatrix} 0 \\ 0 \end{bmatrix};
\end{equation}
therefore, in the leading order,  
 \begin{equation}\label{eq:pert17}
\left[\vec A_{E_{4j-1}}(\sigma\gamma_{j+N})\right]_k = \left\{ \begin{array}{ll} 0 & k\ne j, \\ \log\left[\frac{\tau_{2j-1}\alpha_j}{\tau_{2j}\sqrt{\alpha_j\beta_j}} \right] & k=j. \end{array} \right.
\end{equation}
Therefore, from (\ref{eq:pert15}),  (\ref{eq:pert17}), (\ref{eq:abel03}) it follows that, in the leading order, the reality condition (\ref{eq:cher0}) is fulfilled.
\end{remark}

In this approximations for the Abel transform with the starting point $\tau=0$ we obtain:
\begin{align}\label{eq:pert19}
  A_j(E_{4j-3}) &= -\log\left[\frac{\tau_{2j} q_{2j}}{i\Im(\tau_{2j}\tau_{2j-1}^{-1})(\tau_{2j-1}-\tau_{2j}) }\varepsilon\sqrt{\alpha_j\beta_j}\right], \\
   A_j(E_{4j-1}) &= \log\left[\frac{\tau_{2j-1} q_{2j-1}}{i\Im(\tau_{2j}\tau_{2j-1}^{-1})(\tau_{2j}-\tau_{2j-1}) }\varepsilon\sqrt{\alpha_j\beta_j}\right], \
\end{align}
\begin{equation}\label{eq:pert20}
b_{jj}=A_j(E_{4j-1})- A_j(E_{4j-3})  =\log\left[\frac{\tau_{2j-1}\tau_{2j} q_{2j-1}q_{2j}}{\Im^2(\tau_{2j}\tau_{2j-1}^{-1})(\tau_{2j-1}-\tau_{2j})^2}\varepsilon^2(\alpha_j\beta_j)\right].
\end{equation}

{\color{Violet}
\begin{remark}
The asymptotics of the Abel differentials and the Riemann matrix for Riemann surfaces with pinched cycles is discussed in \cite{Fay}; moreover, using the technique of this book, it is possible to calculate the second-order corrections. For surfaces close to the degenerate ones, one can effectively use the Schottky parameterization, see book \cite{BBEIM} and references therein for more details.
\end{remark}
\begin{remark}
If one of the quantities $\alpha_j$, $\beta_j$  (\ref{eq:pert12.2}) equals 0, the corresponding resonant point becomes double point in the leading order approximation. Using the arguments analogous to \cite{Krichever2,Krichever3}, it is easy to check that a regular doubly-periodic perturbation may be chosen in such a way that we obtain a double point in the exact theory. For the 2-d Schr\"odinger operator at a fixed energy level the existence of singular spectral curves corresponding to regular doubly-periodic potentials was pointed out in  \cite{Krichever2}. Let us point out that non-removable double points correspond to resonant pairs associated to unstable modes; the resonant points associated to stable modes after perturbation become either small handles or remain removable double points. The role of the singular spectral curves in the theory of soliton equations including the DS2 and the modified Novikov-Veselov equation was studied in \cite{Taim4,Taim7}.
\end{remark}
}
\section{Vector of Riemann constants}

Denote the starting point of the Abel transform by $P_0$. Then the vector of Riemann constant is given by the following formuls (see \cite{Dubrovin2}):
\begin{equation}
\label{eq:riemann_const1}
K_j = \frac{b_{jj}}{2} - \pi i - \frac{1}{2\pi i}\sum_{k\ne j }^{2N} \int_{\tilde a^k}  A_j(\gamma) \omega_k,
\end{equation}
where $A_j$ is the $j$-th component of the Abel transform of the point $\gamma$. 
\begin{equation}
\label{eq:riemann_const2}
{\cal A}_j(\gamma) =\int_{P_0}^{\gamma} \omega_j.
\end{equation}
In this formula it is important to have a proper realization of the basis of cycles. More precisely, it is necessary that:
\begin{enumerate}
\item All basic cycles start and end at the same starting point $Q_0$;
\item They do not intersect outside the point $Q_0$.  
\item Near the point $Q_0$ we have the following order of curves (if counted clockwise): Start of $a_1$, end of $b_1$, end of $a_1$,
  start of $b_1$, start of $a_2$, end of $b_2$, end of $a_2$, start of $b_2$, \ldots, start of $b_g$. See Fig.~\ref{fig:riemann1} left for $g=8$:
\end{enumerate}
\begin{figure}[H]
  \centering{
    \includegraphics[width=0.2\textwidth]{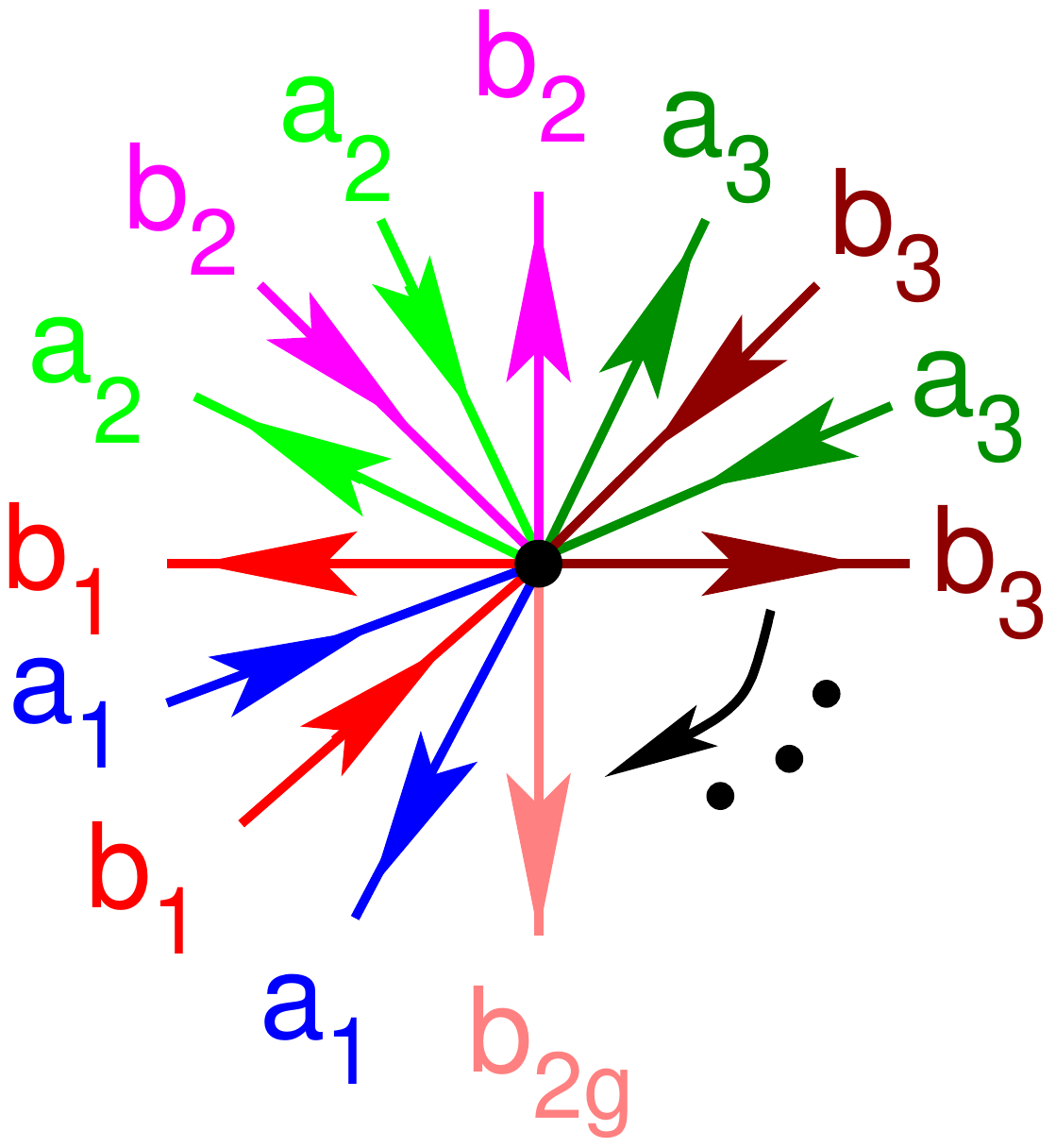}\hspace{5mm}
    \includegraphics[width=0.35\textwidth]{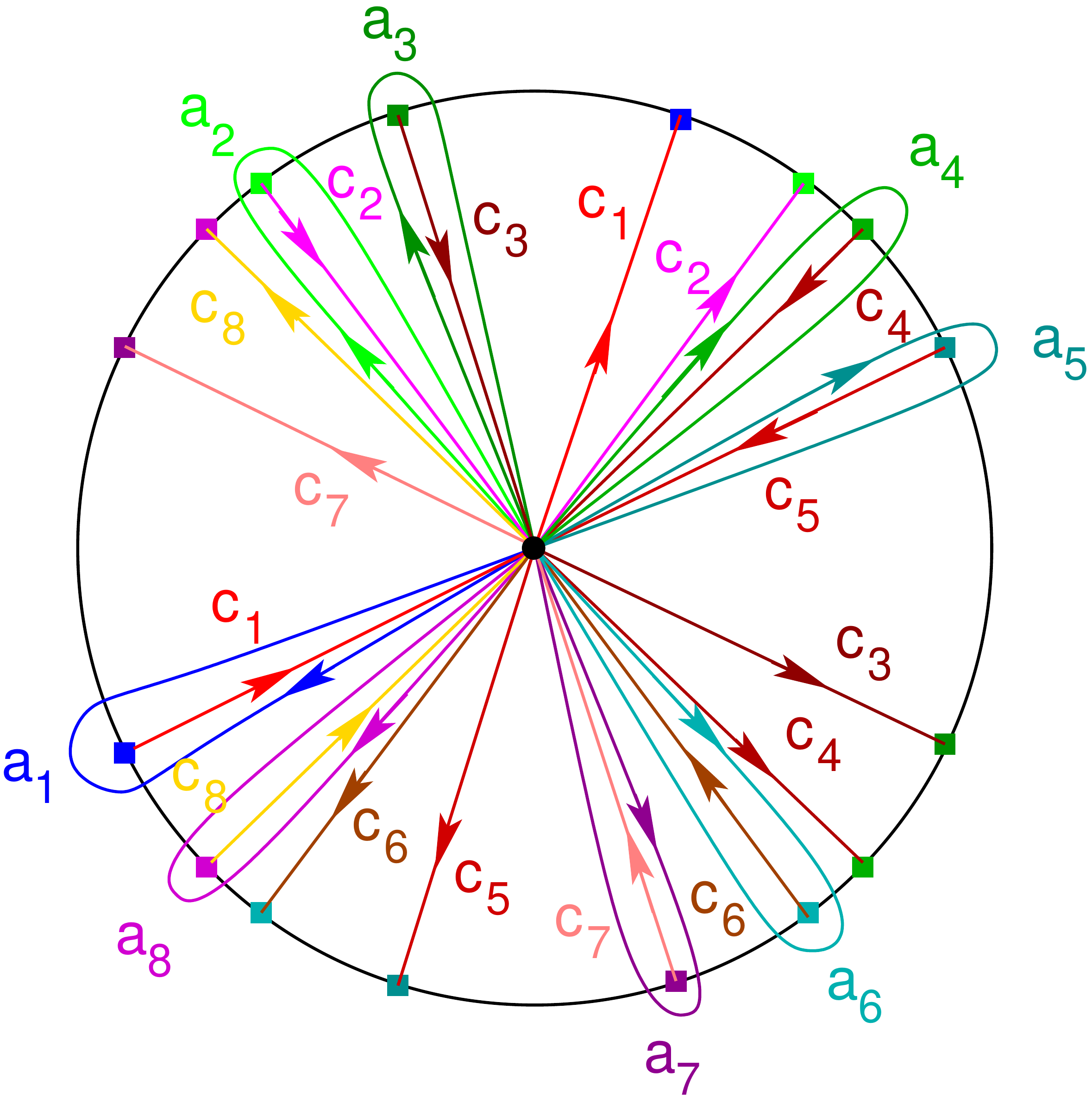}\hspace{5mm}
    \includegraphics[width=0.35\textwidth]{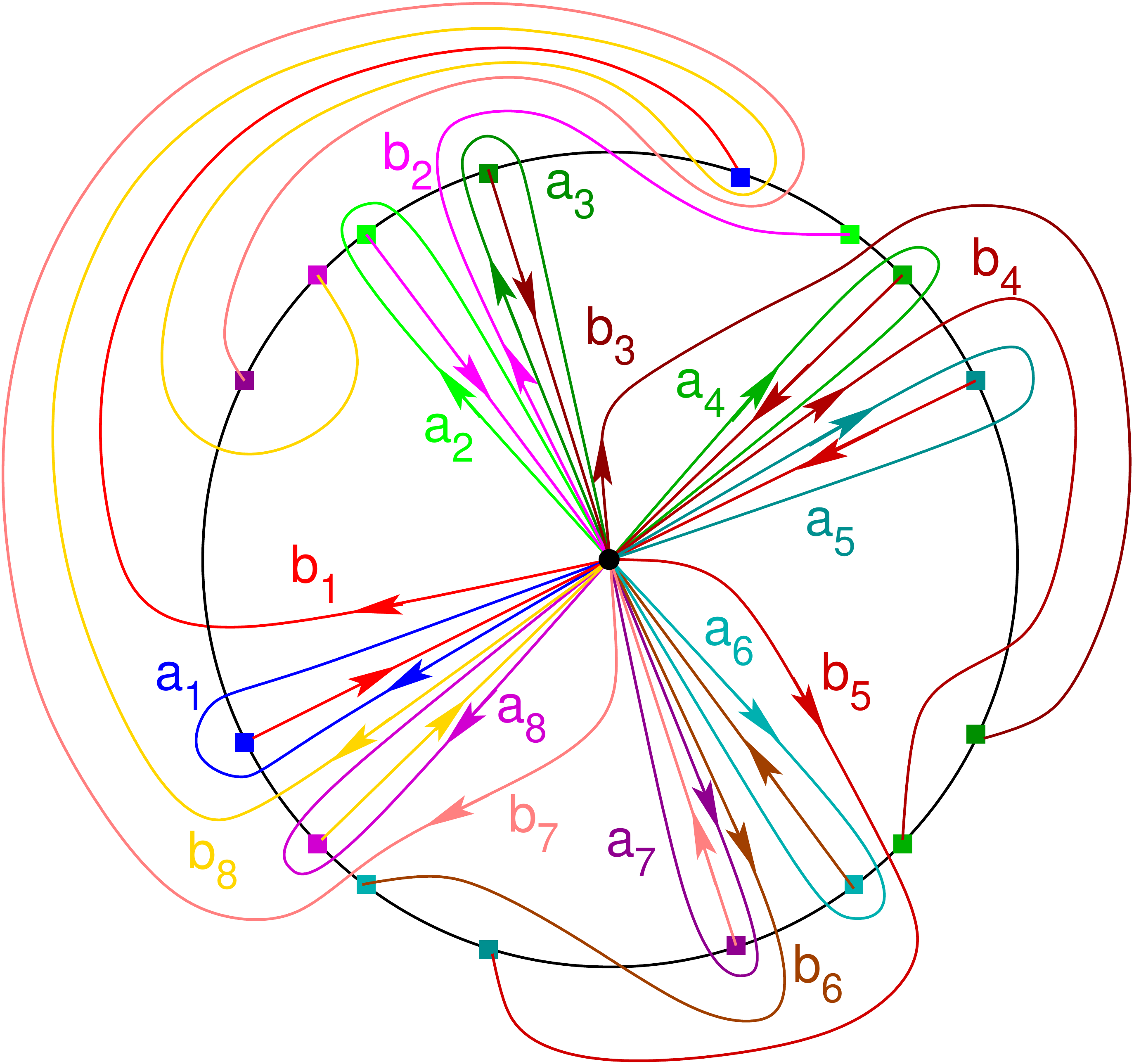}
      \caption{\small{\sl On the left: the order of cycles near the point $Q_0$. On the right: the collection of contours used in the calculation of the Riemman constants.}}\label{fig:riemann1}}
\end{figure}

A basis of such cycles is presented at Figure~\ref{fig:riemann1}. It is clear that, for $j \ne k$, in the $\epsilon$-neighbourhood of the cut $[E_{4k-3},E_{4k-2}]$ we have 
\begin{equation}
\label{eq:riemann_const3}
A_j(\gamma) = A_j(E_{2k-1})+ O(\epsilon);
\end{equation}
therefore 
\begin{equation}
\label{eq:riemann_const4}
- \frac{1}{2\pi i}\int_{\tilde a^k}  A_j(\gamma) \omega_k = - A_j(E_{2k-1})+ O(\epsilon).
\end{equation}
Therefore by analogy with \cite{GS5} we can use the following modification of the formulas: Let the divisor point $\gamma_k$ be located near the contour $a_k$. Then we redefine the Abel transform of the divisor by assuming
\begin{equation}
\label{eq:abel11}
A_j(\gamma_k) =\int_{E_{4k-3}}^{\gamma_k}\omega_j.
\end{equation}
Then
\begin{equation}
\label{eq:riemann_const1}
K_j = \frac{b_{jj}}{2} - \pi i + A_j(E_{4j-3}) +O(\epsilon).
\end{equation}

{\color{Violet}
\section{Summary of the results of the paper}
  
  The results of the paper can be summarized as follows:

Consider the focusing Davey-Stewardson 2 equation (\ref{eq:DS2}) in the space of spatially doubly-periodic functions with the periods $L_x,L_y$ respectively. The Cauchy problem for (\ref{eq:DS2}) is called \textbf{doubly-periodic Cauchy problem for anomalous waves} if the Cauchy data are a small doubly-periodic perturbation of the constant background (\ref{eq:DS2_1}). 
\begin{theorem}
  Assume that the Cauchy data  (\ref{eq:DS2_1}) for equation (\ref{eq:DS2}) have the following properties:
 \begin{enumerate}
 \item The background is unstable, i.e. the open disk $k_x^2+k_y^2 < 4 |a|^2$ contains a least one point $(k_x,k_y)$ of the type (\ref{eq:one_harm}) with $(n_x,n_y)\ne(0,0)$;
 \item The periods $L_x,L_y$ are generic in the following sense:
   \begin{itemize}
   \item $k_x^2+k_y^2 \ne 4 |a|^2$ for all integer $n_x,n_y$;
   \item Consider the map $\CC\rightarrow \CC^2$, $\tau\rightarrow (\varkappa_x,\varkappa_y)$ from the $\tau$-plane to the quasimomenta space  defined by (\ref{eq:param}), (\ref{eq:DS_lax4.4}). Then for each point $(\varkappa_x,\varkappa_y)\in\CC^2$ the number of preimages is not greater than 2. Equivalently, this condition means that all multiple points of the spectral curve in the quasimomenta space are double points;
   \end{itemize}  
 \item The Cauchy data satisfy the following genericity conditions: for each unstable mode we have $\alpha_j\beta_j\ne 0$, where the quantities $\alpha_j$, $\beta_j$ are expressed through the corresponding Fourier coefficient of the perturbation by (\ref{eq:pert6}),  (\ref{eq:pert7}), respectively.
 \end{enumerate}

To simplify the final formulas, assume also that $a=1$ and the zero Fourier harmonics of the perturbation $v_0(x,y)$ is equal to zero (\ref{eq:zero_harm}). Due to the scaling invariance of the DS equation these constraints are not restrictive. 

Then, for $|\varepsilon|\ll 1$, the leading order solution of the Cauchy problem  (\ref{eq:DS2_1}) for  the focusing DS2 equation is provided by formula (\ref{eq:psi3}), where ${\cal C}_z= {\cal C}_{\bar z} = {\cal C}_t =0$, the Riemann $\theta$-function is defined by (\ref{eq:theta0}), $g=2N$, where $N$ is the number of unstable modes (while counting the unstable modes we assume $(n_x,n_y)\ne(0,0)$), the Riemann matrix of periods $B=(b_{jl})$ is defined by (\ref{eq:pert20}), (\ref{eq:abel04}), the vectors  $\vec W_z$, $\vec W_{\bar z}$, $\vec W_{t}$ are given by (\ref{eq:abel01}), (\ref{eq:abel02}), $\vec A(\infty_1)=\vec 0$,  $\vec A(\infty_2)$ is given by  (\ref{eq:abel03}), $\vec A({\cal D})$ is given by (\ref{eq:pert15}), $\vec K$ is given by  (\ref{eq:riemann_const1}), (\ref{eq:pert20}).

Moreover, following the scheme of \cite{GS5}, it is sufficient to keep only the leading order terms in (\ref{eq:theta0}), and final formulas can be expressed in terms of elementary functions of the Cauchy data. The leading order terms are different for different regions in the $(x,y,t)$ space; therefore the approximating formulas depend on the region. Following  \cite{GS5}, the boundaries of these regions can be calculated explicitly in terms of the Cauchy data.
\end{theorem}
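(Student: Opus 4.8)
The plan is to prove the theorem by assembling the direct and inverse spectral transforms of the preceding sections into a single chain of leading-order computations. First I would carry out the direct transform following the Krichever perturbation scheme of Section~6: restrict $\mathcal L=\mathcal L_0+\epsilon\mathcal L_1$ to the Bloch spaces $\mathcal F(p,q)$ and analyze the neighbourhood of each resonant pair $(\tau_{2j-1},\tau_{2j})$. For the unstable pairs the $2\times2$ block (\ref{eq:pert4.1}), with entries $\alpha_j,\beta_j$ read off from (\ref{eq:pert6}), (\ref{eq:pert7}), is non-degenerate by the genericity hypothesis $\alpha_j\beta_j\ne 0$, so the corresponding double point opens into a handle whose branch points are given by (\ref{eq:pert12.2}); the stable resonant pairs are discarded in the finite-gap approximation, using the genericity of $L_x,L_y$ to guarantee that all multiple points are simple double points. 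This produces the genus-$2N$ approximating curve on which every subsequent formula lives.

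Next I would collect the theta-functional data on this curve, which is precisely the content of the lemmas already established: the holomorphic differentials (\ref{eq:pert18}), the diagonal and off-diagonal entries of the Riemann matrix (\ref{eq:pert20}), (\ref{eq:abel04}), the periods $\vec W_z,\vec W_{\bar z},\vec W_t$ of the normalized meromorphic differentials (\ref{eq:abel01}), (\ref{eq:abel02}), the images $\vec A(\infty_2)$ from (\ref{eq:abel03}) and $\vec A(\mathcal D)$ from (\ref{eq:pert15}), the vanishing constants (\ref{eq:constants2}), and the vector of Riemann constants (\ref{eq:riemann_const1}). With these in hand I would verify the reality reduction: combining (\ref{eq:pert15}), (\ref{eq:pert17}) with (\ref{eq:abel03}) shows that the divisor satisfies (\ref{eq:cher0}) to leading order, and fixing $X_{-1}=-2/u(0)$ as in (\ref{eq:xi-1}) so that the normalization (\ref{eq:cher3}) holds, the reduction Lemma~2 guarantees that the complex Dirac operator (\ref{eq:DS_lax6}) reduces to the focusing DS2 operator with $v=-\bar u$. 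The Its formula (\ref{eq:psi1})--(\ref{eq:psi3}) then yields the closed-form potential $u(z,t)$ asserted in the statement.

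I would then close the inverse-then-direct loop by evaluating at $t=0$ and checking that $u(x,y,0)=a+\varepsilon v_0(x,y)+O(\varepsilon^2)$, using $a=1$, the zero-mean assumption (\ref{eq:zero_harm}), and the reconstruction of the Fourier coefficients of $v_0$ from the $\alpha_j,\beta_j$; this confirms that the spectral data built from the Cauchy data indeed reproduces that data to leading order. Finally, to pass to elementary functions I would exploit that the diagonal periods blow up, $b_{jj}\sim\log\varepsilon^2\to-\infty$, so that in the theta-series (\ref{eq:theta0}) only finitely many lattice vectors dominate in any given region of $(x,y,t)$-space; retaining the dominant balance collapses $u$ to elementary functions, and the region boundaries are the loci where two dominant exponents coincide, which are explicit in the Cauchy data --- this is the scheme of \cite{GS5}.

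The main obstacle I expect is the rigorous justification of the finite-gap approximation itself: showing that discarding the stable modes and evaluating the off-diagonal periods and the meromorphic-differential periods on the unperturbed curve $\Gamma_0$, rather than on the true infinite-genus spectral curve, perturbs the reconstructed solution only at relative order $O(\varepsilon)$, uniformly enough to represent the genuine solution of the Cauchy problem. A secondary difficulty is the region-dependent asymptotics of the theta-function: identifying the correct dominant balance, controlling the transition layers where the dominant terms exchange roles, and confirming that the elementary-function approximation is uniform on each region away from its boundary.
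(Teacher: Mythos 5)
Your proposal is correct and follows essentially the same route as the paper: the paper's proof of this theorem is precisely the assembly of its preceding sections --- Krichever-type perturbation theory at the resonant pairs giving the genus-$2N$ curve and branch points (\ref{eq:pert12.2}), the leading-order theta-data (\ref{eq:abel01})--(\ref{eq:abel04}), (\ref{eq:pert15}), (\ref{eq:pert20}), (\ref{eq:riemann_const1}), the reality check via (\ref{eq:pert15}), (\ref{eq:pert17}), (\ref{eq:abel03}) feeding the reduction lemma, and the Its formula (\ref{eq:psi3}), with the elementary-function asymptotics deferred to the scheme of \cite{GS5}. The two obstacles you flag (rigorous control of the finite-gap/unperturbed-curve approximation, and uniformity of the region-wise theta asymptotics) are likewise left at leading order in the paper itself, so your account matches both its content and its level of rigor.
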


\section{Acknowledgments} The work P. G. Grinevich was supported by the Russian Science Foundation under grant No. 21-11-00311. P. M. Santini acknowledges support from the Italian Ministry for Research through the PRIN2020 project (number 2020X4T57A). \\
We acknowledge useful discussions with A. Bogatyrev.

}


\begin{thebibliography}{99}

\bibitem{ABB} M. J. Ablowitz, G. Biondini, S. Blair, ``Nonlinear Schr\"odinger equations with mean terms in nonresonant multidimensional quadratic materials'', \textit{Physical Review E}, \textbf{63}, 046605.
  
\bibitem{AH} M.J. Ablowitz, R. Haberman, ``Nonlinear evolution equations - two and three dimensions'', \textit{Phys. Rev. Lett.}, \textbf{35} (1975), 1185--1188.

\bibitem{AS} M. J. Ablowitz and H. Segur, \textit{Solitons and the Inverse Scattering Transform}, SIAM, Philadelphia, 1981.
  
\bibitem{Akhmed0} N.N. Akhmediev, V.M. Eleonskii, and N.E. Kulagin, ``Generation of periodic trains of picosecond pulses in an optical fiber: exact solutions'', \textit{Sov. Phys. JETP}, \textbf{62}:5 (1985), 894--899. 

\bibitem{Akhmed2013}  N. Akhmediev, J.M. Dudley, D.R. Solli, S.K. Turitsyn, ``Recent progress in investigating optical rogue waves'', \textit{Journal of Optics}, \textbf{15}(6) (2013), 060201.

\bibitem{Anker} D. Anker, N.C. Freeman,  ``On the Soliton Solutions of the Davey-Stewartson Equation for Long Waves'', \textit{Proc. R. Soc. Lond. A}, \textbf{360} (1978), 529--540.  
  
\bibitem{Bailung} H. Bailung, S.K. Sharma, Y. Nakamura, ``Observation of Peregrine solitons in a multicomponent plasma with negative ions'', \textit{Physical Review Letters}, \textbf{107} (2011), 255005.

\bibitem{Baronio} F. Baronio, M. Conforti, A. Degasperis, S. Lombardo, M. Onorato, S. Wabnitz, ``Vector Rogue Waves and Baseband Modulation Instability in the Defocusing Regime'', \textit{Phys. Rev. Lett.}, \textbf{113}  (2014), 034101.
  
\bibitem{BBEIM} E.D. Belokolos, A.I. Bobenko, V.Z. Enolski, A.R. Its, V.B. Matveev, \textit{Algebro-geometric Approach in the Theory of Integrable Equations}, Springer Series in Nonlinear Dynamics, Springer, Berlin, 1994. 

\bibitem{BF} T.B. Benjamin, J.E. Feir, ``The disintegration of wave trains on deep water. Part I. Theory'', \textit{Journal of Fluid Mechanics}, \textbf{27} (1967) 417--430.
  
\bibitem{Benney} D.J. Benney, G.J. Roskes, ``Wave Instabilities'', \textit{Stud. Appl. Math.}, \textbf{48} (1969), 377--385.
  
\bibitem{Bespalov} V.I. Bespalov and V.I. Talanov, ``Filamentary structure of light beams in nonlinear liquids'', \textit{JETP Letters}, \textbf{3}(12) (1966), 307--310.
  
\bibitem{Bludov} Yu.V. Bludov, V.V. Konotop, N. Akhmediev, ``Matter rogue waves'', \textit{Physical Review A}, \textbf{80} (2009), 033610.
  
\bibitem{BLMP} M. Boiti, J. Leon, L. Martina and F. Pempineili, ``Scattering of localized solitons in the plane'', \textit{Phys. Lett. A}, \textbf{132} (1988) 432--439.

\bibitem{Chabchoub} A. Chabchoub, N. Hoffmann, N. Akhmediev, ``Rogue wave observation in a water wave tank'', \textit{Physical Review Letters}, \textbf{106} (20) (2011), 204502.
  
\bibitem{Cher} I.V. Cherednik, ``On the conditions of reality in “finite-gap integration'', \textit{Sov. Phys. Dokl.}, \textbf{25} (1980), 450--452.

\bibitem{Coppini1}  F. Coppini, P. G. Grinevich  and P. M. Santini, ``The effect of a small loss or gain in the periodic NLS anomalous wave dynamics''. \textit{I. Phys. Rev. E}, \textbf{101} (2020), 032204.

\bibitem{Coppini2} F. Coppini, P.M. Santini, ``The Fermi-Pasta-Ulam-Tsingou recurrence of periodic anomalous waves in the complex Ginzburg-Landau and in the Lugiato-Lefever equations'', \textit{Phys. Rev. E}, \textbf{102} (2020), 062207.

\bibitem{Coppini3} F. Coppini, P.G. Grinevich, P.M. Santini, ``Periodic Rogue Waves and Perturbation Theory''. In: Meyers R.A. (eds) Encyclopedia of Complexity and Systems Science. Springer, Berlin, Heidelberg, (2021). https://doi.org/10.1007/978-3-642-27737-5$\_$762-1.

\bibitem{DS} A. Davey, K. Stewartson, ``On Three-Dimensional Packets of Surface Waves'', \textit{Proc. R. Soc. Lond. A}, {\bf 338} (1974), 101--110.

\bibitem{Dematteis-stat}  G. Dematteis, T. Grafke, M. Onorato, E. Vanden-Eijnden, ``Experimental Evidence of Hydrodynamic Instantons: The Universal Route to Rogue Waves'', \textit{Phys. Rev. X}, \textbf{9} (2019), 041057.
   
\bibitem{Dubrovin2} B.A. Dubrovin, ``Theta functions and non-linear equations'', \textit{Russian Math. Surveys}, \textbf{36}:2 (1981), 11--92.

\bibitem{DKN} B.A. Dubrovin, I.M. Krichever, S.P. Novikov, ``The Schrödinger equation in a periodic field and Riemann surfaces'', \textit{Soviet Math. Dokl.}, \textbf{17} (1977), 947--951.
  
\bibitem{Dysthe} K.B. Dysthe, K. Trulsen, ``Note on Breather Type Solutions of the NLS as Models for Freak-Waves'', \textit{Physica Scripta}, \textbf{T82} (1999), 48--52.

\bibitem{El1-stat} G.A. El, A. Tobvis, ``Spectral theory of soliton and breather gases for the focusing nonlinear Schr\"odinger equation'', \textit{Phys. Rev. E}, \textbf{101} (2020), 052207.

\bibitem{El2-stat} G.A. El, ``Soliton gas in integrable dispersive hydrodynamics'', \textit{J. Stat. Mech.} (2021), 114001.

\bibitem{Fay} J.D. Fay, \textit{Theta Functions on Riemann Surfaces},  Lecture Notes in Mathematics, \textbf{352}, Springer, 1973.

\bibitem{Fedele} F. Fedele, J. Brennan, S. Ponce de Le\'on, J. Dudley, F. Dias, ``Real world ocean rogue waves explained without the modulational instability'', \textit{Scientific Reports}, \textbf{6} (2016), 27715.

\bibitem{Fokas1} A.S. Fokas, P.M. Santini, ``Coherent structures in multidimensions'', \textit{Phys. Rev. Lett.}, \textbf{63} (1989), 1329--1333; doi: 10.1103/PhysRevLett.63.1329.

\bibitem{Fokas2} A.S. Fokas, P.M. Santini, ``Dromions and a Boundary-value Problem for the Davey-Stewartson Equation'', \textit{Physica D}, \textbf{44}:1-2 (1990), 99--130.

\bibitem{Gelash-stat}  A. Gelash, D. Agafontsev, V. Zakharov, G. El, S. Randoux, P. Suret, ``Bound state soliton gas dynamics underlying the spontaneous modulational instability'', \textit{Phys. Rev. Lett.}, \textbf{123} (2019), 234102.

\bibitem{GMN} P.G. Grinevich, A.E. Mironov, S.P. Novikov, ``On the zero level of purely magnetic nonrelativistic 2D Pauli Operators (spin=1/2)'', \textit{Theoretical and Mathematical Physics}, \textbf{164}:3 (2010), 1110--1127; doi:10.1007/s11232-010-0089-0; ``Erratum'', \textit{Theoretical and Mathematical Physics}, \textbf{166}:2 (2011), 278; doi:10.1007/s11232-011-0022-1.

\bibitem{GS1} P.G. Grinevich, P.M. Santini, ``The finite gap method and the analytic description of the exact rogue wave recurrence in the periodic NLS Cauchy problem. 1'',  \textit{Nonlinearity}, \textbf{31}, No.11 (2018), 5258--5308.

\bibitem{GS5} P.G. Grinevich, P.M. Santini, ``The finite-gap method and the periodic NLS Cauchy problem of anomalous waves for a finite number of unstable modes'', \textit{Russian Math. Surveys}, \textbf{74}(2)  (2019), 211--263.

\bibitem{GS2}  P.G. Grinevich, P.M. Santini, ``The exact rogue wave recurrence in the NLS periodic setting via matched asymptotic expansions, for 1 and 2 unstable modes'', \textit{Phys. Lett. A}, \textbf{382}(14) (2018), 973--979.  
  
\bibitem{Haver} S. Haver, \textit{Freak wave event at Draupner jacket January 1 1995. (Report)},  Statoil, Tech. Rep. PTT-KU-MA., Retrieved 2015-06-03, 1995.

\bibitem{Henderson} K.L. Henderson, D.H. Peregrine, J.W. Dold, ``Unsteady water wave modulations: fully nonlinear solutions and comparison with the nonlinear Schr\"odinger equtation'', \textit{Wave Motion}, \textbf{29} (1999), 341--361.  

\bibitem{Huang}   G. Huang, L. Deng, C. Hang, ``Davey-Stewartson description of two-dimensional nonlinear excitations in Bose-Einstein condensates'',  \textit{Phys.Rev. E}, \textbf{72} (2005), 036621.

\bibitem{Its}  A.R. Its, A.V. Rybin, M.A. Sall, ``Exact integration of nonlinear Schr\"odinger equation'', \textit{Theoret. and Math. Phys.}, \textbf{74}:1 (1988), 20--32.
  
\bibitem{Kibler} B. Kibler, J. Fatome, C. Finot, G. Millot, F. Dias, G. Genty, N. Akhmediev, J. Dudley, ``The Peregrine soliton in nonlinear fibre optics'',  \textit{Nature Physics}, \textbf{6} (10) (2010), 790--795.

\bibitem{Kibler2} B. Kibler, J. Fatome, C. Finot, G. Millot, G. Genty, B. Wetzel, N. Akhmediev, F. Diaz, and J. Dudley, ``Observation of Kuznetsov-Ma soliton dynamics in optical fibre'', \textit{Scientific Reports}, \textbf{2}:463, 2012.

\bibitem{Kivshar} Yu.S. Kivshar,  B. Luther-Davies, ``Dark optical solitons: physics and applications'', \textit{Physics Reports}, \textbf{298}:2 (1998), 81--197.

\bibitem{KleinSaut} C. Klein, J.-C. Saut, ``IST versus PDE: a comparative study'', Hamiltonian Partial Differential Equations and Applications, \textit{Fields Inst. Commun.}, \textbf{75} (2015), Fields Inst. Res. Math. Sci., Toronto, ON, 2015, 383--449.
  
\bibitem{Konopel1} B.G. Konopelchenko, ``Induced surfaces and their integrable dynamics'', \textit{Stud. Appl. Math.}, \textbf{96}:1 (1996), 9--51.
  
\bibitem{Konopel2} B.G. Konopelchenko, ``Weierstrass representations for surfaces in 4D spaces and their integrable deforma-tions via DS hierarchy,'' \textit{Ann. Global Anal. Geom.}, \textbf{18}(1) (2000), 61--74.

\bibitem{Krichever2} I.M. Krichever, ``Spectral theory of two-dimensional periodic operators and its applications'', \textit{Russian  Math.  Surveys}, \textbf{44}:2 (1989), 145--225.

\bibitem{Krichever3} I.M. Krichever, ``Perturbation Theory in Periodic Problems for Two-Dimensional Integrable Systems'',
\textit{Sov. Sci. Rev., Sect. C, Math. Phys. Rev.}, \textbf{9}:2 (1992), 1--103 .

\bibitem{Kuznetsov} E.A. Kuznetsov, ``Solitons in a parametrically unstable plasma'', \textit{Sov. Phys. Dokl.}, \textbf{22} (1977), 507--508.

\bibitem{Liu} C. Liu, R.E.C. van der Wel, N. Rotenberg, L. Kuipers, T.F. Krauss, A.D. Falco, A. Fratalocchi, ``Triggering extreme events at the nanoscale in photonic seas'',  \textit{Nature Physics}, \textbf{11}(4):358--363, 2015.

\bibitem{Liu1} C. Liu, C. Wang, Z. Dai, J. Liu, ``New Rational Homoclinic and Rogue Waves for Davey-Stewartson Equation'', \textit{Abstract and Applied Analysis}, \textbf{2014} (2014), 572863, 8 pages, https://doi.org/10.1155/2014/572863.

\bibitem{Liu2}  Y. Liu, C. Qian, D. Mihalache, J. He, ``Rogue waves and hybrid solutions of the Davey-Stewartson I equation'', \textit{Nonlinear Dynamics}, \textbf{95}(1) (2019), 839--857,  https://doi.org/10.1007/s11071-018-4599-x.


\bibitem{Mum1983} David Mumford, \textit{Tata Lectures on Theta I}, Progress in Mathematics, v. 28, Springer Science \& Business Media, 1983.

\bibitem{MatTaim}  R.M. Matuev, I.A. Taimanov,  ``The Moutard transformation of two-dimensional Dirac operators and the conformal geometry of surfaces in four-dimensional space,'' \textit{Math Notes}, \textbf{100} (2016), 835--846. 

\bibitem{Moslem} W.M. Moslem, R. Sabry, S.K. El-Labany, P.K. Shukla, ``Dust-acoustic rogue waves in a nonextensive plasma'', \textit{Phys. Rev. E}, \textbf{84}:066402, (2011).

\bibitem{Newell} A.C. Newell, J.V. Moloney, \textit{Nonlinear Optic}, Addison-Wesley, Redwood City, 1992.

\bibitem{Nishinari} K. Nishinari, K. Abe, J. Satsuma, ``A new type of soliton behavior of the Davey-Stewartson equations in a plasma system'', \textit{Theoretical and Mathematical Physics}, \textbf{99}:3 (1994), 745--753.
  
\bibitem{Onorato1} M. Onorato, T. Waseda, A. Toffoli, L. Cavaleri, O. Gramstad, P.A.E.M. Janssen, T. Kinoshita, J. Monbaliu, N. Mori, A.R. Osborne, M. Serio, C.T. Stansberg, H. Tamura, K. Trulsen, ``Statistical properties of directional ocean waves: the role of the modulational instability in the formation of extreme events'', \textit{Phys. Rev. Lett.}, \textbf{102} (2009), 114502.

\bibitem{Osborne} A. Osborne, M. Onorato, M. Serio, ``The nonlinear dynamics of rogue waves and holes in deep-water gravity wave trains'', \textit{Phys. Lett. A}, \textbf{275}:5--6 (2000), 386--393.

\bibitem{Otha1}  Y. Ohta, J. Yang, ``Rogue waves in the Davey-Stewartson equation'', \textit{Phys. Rev. E}, \textbf{86} (2012), 036604.

\bibitem{Otha2}  Y. Ohta, J. Yang, ``Dynamics of rogue waves in the Davey-Stewartson II equation'', \textit{J. Phys. A: Math. Theor.}, \textbf{46} (2013), 105202.

\bibitem{Ozawa}  T. Ozawa, ``Exact blow-up solutions to the Cauchy problem for the Davey-Stewartson systems'', \textit{Proc. Roy. Soc. London Ser. A}, \textbf{436}:1897 (1992), 345--349.
  
\bibitem{PedPink} F. Pedit, U. Pinkall, ``Quaternionic analysis on Riemann surfaces and differential geometry'', \textit{Doc. Math., J. DMV Extra Vol. ICM}, \textbf{II} (1998), 389--400.
  
\bibitem{Peregrine} D.H. Peregrine, ``Water waves, nonlinear Schr\"odinger equations and their solutions'', \textit{J. Austral. Math. Soc. Ser. B}, \textbf{25} (1983), 16--43.
  
\bibitem{PieranFZMAGSCDR} D. Pierangeli, M. Flammini, L. Zhang, G. Marcucci, A. J. Agranat, P. G. Grinevich, P. M. Santini, C. Conti, and E. DelRe, ``Observation of exact Fermi-Pasta-Ulam-Tsingou recurrence and its exact dynamics'', \textit{Phys. Rev. X}, \textbf{8}:4, p. 041017 (9 pages); doi:10.1103/PhysRevX.8.041017; 

\bibitem{Solli} D.R. Solli, C. Ropers, P. Koonath and B. Jalali, ``Optical rogue waves'', \textit{Nature}, \textbf{450} (2007), pages 1054–1057.

\bibitem{Taim1} I.A. Taimanov,  ``Modified Novikov-Veselov equation and differential geometry of surfaces'', \textit{Amer. Math. Soc. Transl. Ser. 2}, \textbf{179} (1997), 133--151.

\bibitem{Taim2} I.A. Taimanov, ``The global Weierstrass representation and its spectrum'', \textit{Russian Mathematical Surveys}, \textbf{52}:6 (1997), 1330.   

\bibitem{Taim3}  I.A. Taimanov, ``The Weierstrass representation of closed surfaces in $R^3$'', \textit{Funct. Anal. Appl.}, \textbf{32}:4 (1998), 258--267. 
  
\bibitem{Taim4}  I.A. Taimanov, ``On two-dimensional finite-gap potential Schr\"odinger and Dirac operators with singular spectral curves'', \textit{Siberian Math. J.}, \textbf{44}:4 (2003), 686--694.

\bibitem{Taim5} I.A. Taimanov, ``Surfaces in the four-space and the Davey–Stewartson equations,'' \textit{J. Geom. Phys.} \textbf{56}(8) (2006), 1235--1256.  

\bibitem{Taim6} I.A. Taimanov, ``Two-dimensional Dirac operator and the theory of surfaces'', \textit{Russian Math. Surveys}, \textbf{61}:1 (2006), 79--159.

\bibitem{Taim7} I.A. Taimanov, ``Singular spectral curves in finite-gap integration'', \textit{Russian Math. Surveys}, \textbf{66}:1 (2011), 107--144.
  
\bibitem{Taim8}  I.A. Taimanov, ``The Moutard Transformation of Two-Dimensional Dirac Operators and M\"obius Geometry'', \textit{Math. Notes}, \textbf{97}:1 (2015), 124--135.

\bibitem{Taim9} I.A. Taimanov, ``Blowing up solutions of the modified Novikov–Veselov equation and minimal surfaces,''
\textit{Theoret. and Math. Phys.}, \textbf{182}(2) (2015), 173--181.

\bibitem{Taim10} I.A. Taimanov, ``The Moutard Transformation for the Davey–Stewartson II Equation and Its Geometrical Meaning'', \textit{Math. Notes}, \textbf{110}:5 (2021), 754--766.

\bibitem{TaimTz} I. A. Taĭmanov, S.P. Tsarëv, ``Blowing up solutions of the Novikov–Veselov equation'', \textit{Dokl. Math., Math. Phys.}, \textbf{77}:3 (2008), 467--468. 


\bibitem{VN1}  A.P. Veselov, S.P. Novikov, ``Finite-zone, two-dimensional, potential Schr\"odinger operators. Explicit formulas and evolution equations'', \textit{Soviet Math. Dokl.}, \textbf{30} (1984), 588--591.

\bibitem{VN2} A. P. Veselov and S. P. Novikov, ``Finite-zone, two-dimensional Schr\"odinger operators. Potential operators'', \textit{Soviet Math. Dokl.} \textbf{30} (1984), 705--708.

\bibitem{Wen} L. Wen, L. Li, Z.D. Li, S.W. Song , X.F. Zhang, W.M. Liu,  ``Matter rogue wave in Bose-Einstein condensates with attractive atomic interaction'', \textit{Eur. Phys. J. D}, \textbf{64} (2011), 473--478. 
  
\bibitem{Zakharov} V.E. Zakharov, ``Stability of period waves of finite amplitude on surface of a deep fluid'', \textit{Journal of Applied Mechanics and Technical Physics}, \textbf{9}(2) (1968), 190--194.

\bibitem{ZSh1} V.E. Zakharov, A.B. Shabat, ``Exact theory of two-dimensional self-focusing and one-dimensional self-modulation of waves in nonlinear media'', \textit{Sov. Phys. JETP}, \textbf{34}:1 (1972), 62--69.

\bibitem{ZSh2} V.E. Zakharov, A.B. Shabat, ``A scheme for integrating the nonlinear equations of mathematical physics by the method of the inverse scattering transform I'', \textit{Funct. Anal. Appl.}, \textbf{8} (1974), 226--235.

\bibitem{ZakharovGelash2} V.E. Zakharov, A.A. Gelash, ``On the nonlinear stage of Modulation Instability'', \textit{Phys. Rev. Lett.}, \textbf{111} (2013), 054101.  
\end{thebibliography}
\end{document}